\newcommand{\rar}{\rightarrow}
\newcommand{\cl}{\mathcal}
\def\grad {{\nabla}}
\newcommand{\bs}[1]{\boldsymbol{#1}}
\renewcommand{\div}{\operatorname{div}}
\definecolor{deepgreen}{cmyk}{1,0,1,0.5}
\newcommand{\R}{\mathbb{R}}
\newcommand{\al}{\alpha}
\newcommand{\be}{\beta}
\newcommand{\de}{\delta}
\newcommand{\la}{\lambda}
\newcommand{\p}{\partial}
\newcommand{\Rmnum}[1]{\expandafter\@slowromancap\romannumeral #1@}
\newcommand{\Del}[1]{}
\numberwithin{equation}{section}
\newtheorem{thm}{Theorem}[section]
\newtheorem{prop}[thm]{Proposition}
\theoremstyle{remark}
\definecolor{green}{rgb}{0,0.8,0} % Redefines the color green.
\newcommand{\tr}{\textrm{tr}\,}
\newcommand{\eps}{\epsilon}
\newcommand{\bbE}{\mathbb E}
\newcommand{\bbR}{\mathbb R}
\newcommand{\bbZ}{\mathbb Z}
\newcommand{\tens}{\otimes}
\newcommand{\Div}{\mbox{Div}\,}
\newcommand{\msf}{\bs{\mathsf{f}}}
\newcommand{\msa}{\bs{\mathsf{a}}}
\newcommand{\msg}{\bs{\mathsf{g}}}
\newcommand{\msu}{\bs{\mathsf{u}}}
\newcommand{\msM}{\bs{\mathsf{M}}}
\newcommand{\msK}{\bs{\mathsf{K}}}
\newcommand{\msE}{\bs{\mathsf{E}}}
\newcommand{\msT}{\bs{\mathsf{T}}}
\newcommand{\msP}{\bs{\mathsf{P}}}
\newcommand{\msd}{\bs{\mathsf{d}}}
\newcommand{\msk}{\bs{\mathsf{k}}}
\newcommand{\hbF}{\hat{\bs F}}
\newcommand{\hbA}{\hat{\bs A}}
\newcommand{\hbE}{\hat{\bs E}}
\newcommand{\hbP}{\hat{\bs P}}
\newcommand{\hbchi}{\hat{\bs \chi}}
\newcommand{\hbv}{\hat{\bs v}}
\newcommand{\sK}{\mathsf{K}}
\newcommand{\sab}{\mathsf{b}}
\newcommand{\sE}{\mathsf{E}}
\newcommand{\sg}{\mathsf{g}}
\newcommand{\sa}{\mathsf{a}}
\newcommand{\su}{\mathsf{u}}
\newcommand{\sv}{\mathsf{v}}
\begin{document}

\title[Gradient elastic material surfaces]{A midsurface elasticity model for a thin, nonlinear, gradient elastic plate}
\author{C. Rodriguez}

\begin{abstract} 
In this paper, we derive a dynamic surface elasticity model for the two-dimensional midsurface of a thin, three-dimensional, homogeneous, isotropic, nonlinear gradient elastic plate of thickness $h$. The resulting model is parameterized by five, conceivably measurable, physical properties of the plate, and the stored surface energy reduces to Koiter's plate energy in a singular limiting case. The model corrects a theoretical issue found in wave propagation in thin sheets and, when combined with the author's theory of Green elastic bodies possessing gradient elastic material boundary  surfaces, removes the singularities present in fracture within traditional/classical models. Our approach diverges from previous research on thin shells and plates, which primarily concentrated on deriving elasticity theories for material surfaces from classical three-dimensional Green elasticity. This work is the first in rigorously developing a surface elasticity model based on a parent nonlinear gradient elasticity theory.
\end{abstract}

\maketitle

\section{Introduction}

\subsection{Classical thin plates and shells}
Much of modern theoretical research into thin plates and shells has focused on either rigorously deriving surface elasticity theories from, or linking them to, classical three-dimensional Green elasticity. The idea (or expectation) is that passing to a two-dimensional surface model significantly decreases the complexity needed to solve problems for the three-dimensional body, while only slightly sacrificing accuracy. The mathematical methods involved include:
\begin{itemize}
\item using gamma convergence to obtain limiting variational problems  \cite{DretRaoult95, DreRaolt96, FriJamesMull02, FriJamesMu06}, 
\item performing asymptotic expansions of the weak and strong forms of the equilibrium equations \cite{Fox93, Ciarlet00Book, SongDai17, SongWangDai19},
\item obtaining leading order-in-thickness expressions for the kinetic and stored energy of a surface contained in the body from classical three-dimensional stored and kinetic energies \cite{HilgPip92b, HilgPip96, HilgPip97, Steig13, SteigShir19}.
\end{itemize}
One attractive aspect of the third approach, adopted in this paper, is its simplicity in incorporating both bending and stretching effects for the surface model. The equations governing the motion of the surface are subsequently derived by applying Hamilton's variational principle (see, e.g., \cite{HilgPip92a, Hilgers97}). 

An especially important example of a stored surface energy is \emph{Koiter's plate energy}. Consider a homogeneous, isotropic, nonlinearly elastic plate $\cl B = \cl S \times [-h/2, h/2] \subset \bbE^3$ with middle surface (midsurface) $\cl S$, thickness $h$, and stored energy per unit reference area $W$.\footnote{In this work, we denote three-dimensional Euclidean space by $\bbE^3$ and identify its translation space with $\bbR^3$ via a fixed orthonormal basis $\{ \bs e_i \}_{i = 1}^3$. Upon choosing an origin $\bs o \in \bbE^3$, we identify subsets of $\bbE^3$ with subsets of $\bbR^3$ via their position vectors: $\bbE^3 \ni \bs p \mapsto \bs p - \bs o \in \bbR^3$.  Throughout this work, we use standard vector and tensor operations in $\bbR^3$. We also raise and lower indices using the flat metric on $\bbR^3$, and we use the Einstein summation convention that repeated indices in upper and lower positions imply summation. 
	
Finally, we use standard big-oh and little-oh notation, e.g., $A = O(B)$ means that there exists $C \geq 0$ such that $|A| \leq C B$. We say that a big-oh term depends on $D$ if $C$ depends on $D$, $C = \hat C(D)$. } Here, $\cl S$ is a domain in the two-dimensional Euclidean plane. In what follows, Greek indices range in $\{1,2\}$. For a smooth motion of the midsurface, $\bs y : \cl S \times [t_0, t_1] \rar \bbE^3$, two key tensors on $\cl S$ are used to express Koiter's plate energy: the surface Green-Saint-Venant tensor $\msE$ and the relative curvature tensor $\msK$ defined via
\begin{gather}
	\msE = \sE_{\al \beta} \bs e^\al \tens \bs e^\beta, \quad \sE_{\al \beta} = \frac{1}{2}(\p_\al \bs y \cdot \p_\beta \bs y - \delta_{\al \beta}), \\
	\msK = \sK_{\al \beta} \bs e^\al \tens \bs e^\beta, \quad \sK_{\al \beta} = \bs n \cdot \p_{\al \beta}^2 \bs y, \quad \bs n = \frac{\p_1 \bs y \times \p_2 \bs y}{|\p_1 \bs y \times \p_2 \bs y|}.
\end{gather}
Koiter's plate energy $U_{Koiter}$ \cite{Koit66} is then given by
\begin{align}
 U_{Koiter} = 	h \Bigl ( \frac{\lambda \mu}{\lambda + 2\mu} (\tr \msE)^2 + \mu |\msE|^2 \Bigr )
 + \frac{h^3}{24} \Bigl ( \frac{\lambda \mu}{\lambda + 2\mu} (\tr \msK)^2 + \mu |\msK|^2 \Bigr ). \label{eq:Koiter}
\end{align}
In the above expression, $\lambda$ and $\mu$ represent the usual Lamé parameters of the material, $\tr \msE = \sE_\al^\al$, and $|\msE|^2 = \sE_{\al \beta} \sE^{\al \beta}$. One has a generalization of \eqref{eq:Koiter} for thin shells with curved midsurfaces, but for simplicity, we will restrict our discussion to plates. Steigmann \cite{Steig13} elegantly derived Koiter's plate (and shell) energy by expanding and integrating-in-thickness the three-dimensional stored energy under the assumption of small midplane strains. In particular, he showed that 
\begin{align}
	\int_{-h/2}^{h/2} W \, dZ = U_{Koiter} + o(h^3), \label{eq:Koiterderiv}
\end{align}
where $\bs y = \bs \chi |_{Z = 0}$ is a motion of the midsurface induced by a motion of the three-dimensional plate $\bs \chi : \cl B \times [t_0, t_1] \rar \bbE^3$. An earlier proof of \eqref{eq:Koiterderiv} was obtained by Hilgers and Pipkin in \cite{HilgPip96}, but there, the nature of the surface in $\cl B$ that is evolving according to $\bs y$ is unclear.\footnote{This is due to the fact that $\bs y := h^{-1}\int_{-h/2}^{h/2} \bs \chi \, dZ$ in their work.} In Chapter 4 of \cite{Ciarlet05Book}, Ciarlet presents a compelling body of evidence supporting the claim that \eqref{eq:Koiter} represents the ``best" stored surface energy for thin plates (and shells). His key argument is the fact that solutions to the linearized equilibrium equations\footnote{The equilibrium equations with prescribed tractions $\bs t$ applied on $\p \cl S$ correspond to the Euler-Lagrange equations for $V = \int_{\cl S} U_{Koiter} - \int_{\p \cl S} \bs t \cdot \bs u \, dA$ where $\msE$ and $\msK$ are replaced by their linearizations in the displacement field $\bs u$.} arising from \eqref{eq:Koiter} exhibit the same asymptotic behavior as $h \rar 0$, in the same function spaces, as displacements, averaged across thickness, that solve the full three-dimensional linearized elasticity problem. To the best of our knowledge, the only derivation of an analogous cubic-order-in-thickness kinetic energy for a plate is due to Hilgers and Pipkin \cite{HilgPip97}. They obtained the following expression,
\begin{align}
	K_{HP} = \frac{1}{2} h\rho_R \Bigl (
	|\p_t \bs y|^2 + \frac{h^2}{12} \frac{\la^2}{(\la + 2\mu)^2} |\tr \p_t \msE|^2 + \frac{h^2}{12} |\p_t \bs n|^2 
	\Bigr ), \label{eq:KHP}
\end{align} 
where $\rho_R$ is the plate's reference density. Again, the nature of the surface evolving by $\bs y$ is still ambiguous. Moreover, when combined with \eqref{eq:Koiter} to obtain a dynamic theory, the phase velocities for longitudinal infinitesimal harmonic plane waves have the degenerate property of vanishing in the short wavelength limit (see Section 4.3). 

As far as the author is aware, no previous effort has been made, using any method, to rigorously derive surface elasticity theories for thin plates or shells from three-dimensional, nonlinear, non-classical \emph{gradient} elasticity. This work represents the first in this new direction.    

\subsection{Gradient elasticity}

We now give a brief overview of (second-) gradient elasticity and our motivation for this work. In what follows, $\bs x = \bs \chi : \cl B \times [t_0, t_1] \rar \bbE^3$ is a smooth motion of a body with reference configuration $\cl B \subseteq \bbE^3$, reference density $\rho_R : \cl B \rar (0,\infty)$, and Eulerian velocity field $\bs v(\bs x,t)$.  We recall that classical Green elasticity posits that the stored energy $V$ and kinetic energy $T$ of a part $\cl P \subseteq \cl B$ at time $t$ are of the form, 
\begin{align}
		V(\cl P) = \int_{\cl P} W(\bs F)dV, \quad T(\cl P) = \int_{\cl P} \frac{1}{2} \rho_R |\p_t \bs \chi|^2 dV, 
\end{align}    
where $\bs F = \p_{X^a} \chi^i \bs e_i \tens \bs e^a$ is the deformation gradient. The equations of motion neglecting external body forces and boundary tractions can then be obtained via applying Hamilton's variational principle to the action functional with Lagrangian density $L = T - V$.  

\emph{Gradient elasticity} is a subtheory of gradient continuum mechanics in which the kinetic energy and stored energy densities can also depend on the spatial derivatives of $\p_t \bs \chi$ and $\bs F$ respectively \cite{Maugin17Book}. As before, the equations of motion are then obtained via Hamilton's variational principle. Piola was the first to conceive of continua where the body's internal work expenditures depend on spatial derivatives of higher order than classical Cauchy continua, dating back to 1846 \cite{Piola1846Book, dellIsola15}. However, significant progress in this area didn't emerge until the latter half of the 20th century. During this period, a surge of activity by prominent figures including Toupin \cite{Toupin62, Toupin64}, Green and Rivlin \cite{GreenRivlin64a, GreenRivlin64b}, Mindlin \cite{Mindlin64a, Mindlin1965}, Mindlin and Eshel \cite{MindlinEshel1968}, and Germain \cite{Germain73a, Germain73b} resulted in the development of comprehensive theories of gradient continua, including gradient elasticity. A detailed review of the extensive work and applications of gradient continuum models since then is beyond the scope of this work. For further insights, we refer to the reviews \cite{Askes2011, dellIsola17, Maugin17Book, dellIsola2020higher} and the references cited therein.

In theories of gradient continua, an added layer of complexity emerges since it becomes necessary to prescribe additional boundary conditions. In particular, for the equilibrium theory derived from a stored energy density that depends on the spatial gradient of $\bs F$ and satisfies strong ellipticity, the resulting equilibrium equations form an elliptic system of partial differential equations that include the fourth-order spatial derivatives of the configuration $\bs \chi$. It is then necessary to specify a boundary condition in addition to the traditional prescription of either boundary placement, $\bs \chi |_{\p \cl B}$, or boundary surface tractions, $\bs P \bs N |_{\p \cl B}$, to ensure a well-posed boundary value problem in general.\footnote{Here, $\bs N$ is the outward-pointing normal vector field on $\p \cl B$ and $\bs P$ is the Piola-Kirchhoff tensor, which, in the classical setting is given by $\bs P = \frac{\p W}{\p F^i_a} \bs e^i \tens \bs e_a$. For strain energy densities depending also on the spatial gradient of $\bs F$, the form of $\bs P$ is different (see \eqref{eq:Pdefinition}).} For the case of placement, it is mathematically natural (but perhaps unclear from a physical standpoint) to additionally specify the normal derivative of the configuration $\p_N \bs \chi |_{\p \cl B}$. For the case of boundary surface tractions, it is physically natural (but perhaps mathematically cumbersome) to also specify the distribution of boundary surface couples since these arise (in addition to boundary surface tractions) as \textit{natural boundary conditions} from a variational point of view (see, e.g., Sections 3-4 in \cite{Mindlin64a}, Sections 5-6 of \cite{Toupin64}, Section 5 of \cite{Germain73a}). The analogous problem of determining boundary conditions for Rivlin-Ericksen fluids is briefly discussed in Section 6.2 of \cite{TruesdellRajBook00} and thoroughly discussed in Section 2.5 of \cite{CGRajBook16}. We emphasize that their overarching philosophy applies equally well in the case of solids. In particular, when prescribing boundary conditions for a specific problem, convenience should not be the criterion. Instead, these conditions ``should reflect some physical idea" \cite{TruesdellRajBook00} related to the situation at hand and involve considering both ``the structure of the material that is being enveloped by the boundary as well as the structure of the environment" \cite{CGRajBook16}.  

In this work, we consider especially simple energies for a homogeneous, isotropic, gradient elastic body that introduce the smallest number of additional constants: the frame-indifferent stored energy is 
\begin{align}
	V(\cl P) = \int_{\cl P} \Bigl [ \frac{\la}{2} (\tr \bs E)^2 + \mu |\bs E|^2 + \ell_s^2 \sum_{c = 1}^3 \Bigl ( \frac{\la}{2} (\p_{X^c} \tr \bs E)^2 + \mu |\p_{X^c} \bs E|^2 \Bigr ) \Bigr ] dV, \label{eq:poten}
\end{align} 
and the kinetic energy is
\begin{align}
	T(\cl P) &= \int_{\cl P} \frac{1}{2} \rho_R \Bigl ( |\p_t \bs \chi|^2 + \ell_k^2 |\p_t \bs F (\bs F)^{-1}|^2 \Bigr ) dV  \label{eq:kinen} \\ &= \int_{\bs \chi(\cl P)} \frac{1}{2} \rho \Bigl ( |\bs v|^2 + \ell_k^2 |\mbox{grad}\, \bs v|^2 \Bigr ) dv,
\end{align} 
where $\mbox{grad}\, \bs v = \p_{x^j} v^i \bs e_i \tens \bs e^j$ is the velocity gradient, $\rho = \rho_R (\det \bs F)^{-1}$ is the current density, and $\ell_s$ and $\ell_k$ are two additional positive length parameters. For the case of infinitesimal motions, the kinetic energy \eqref{eq:kinen} was first suggested by Mindlin \cite{Mindlin64a}, and the stored energy \eqref{eq:poten} appeared first in Aifantis and Altan's work \cite{AltanAif1992}, inspired by earlier work of Aifantis \cite{Aifantis1984, Aifantis1987} and Aifantis and Triantafyllidis \cite{TriantafyllidisAifantis1986} on gradient continua.  The values of the length parameters $\ell_s$ and $\ell_k$ in terms of conceivably measurable physical properties has been a source of debate for some years with most values given in terms of the body's natural inter-particle spacing $d$ (see, e.g., Section 4 of \cite{Askes2011}). 

%In this study, we give a formal argument for identifying $\ell_s^2 = d^2/12$ and $\ell_k^2 = d^2/6$ while rigorously deriving the primary cubic order-in-thickness expressions for the stored surface energy and kinetic energy densities of the midsurface from a plate with energies given by \eqref{eq:kinen} and \eqref{eq:poten} (in the spirit of %\eqref{eq:Koiterderiv}). 

In a recent work by the author \cite{Rodriguez2023StrainGradient}, an equilibrium theory was developed for a classical three-dimensional Green elastic bulk solid with a gradient elastic boundary surface. This theory was applied to a mode-III fracture problem, effectively eliminating the problematic singularities present in both stresses and strains that arise from classical linear elastic fracture mechanics. The model's stored surface energy 
\begin{align}
	U_{HP} = 	h \Bigl ( \frac{\lambda \mu}{\lambda + 2\mu} (\tr \msE)^2 + \mu |\msE|^2 \Bigr )
	+ \frac{h^3}{24} \Bigl ( \frac{\lambda \mu}{\lambda + 2\mu} \bigl | \p_\al \p^\al \bs y
	\bigr |^2 + \mu \sum_{\al, \beta} |\p_\al \p_\beta \bs y|^2  \Bigr ) \label{eq:HP}
\end{align}
was suggested in Hilgers and Pipkin's research \cite{HilgPip92a, HilgPip96} as an ad hoc modification of $U_{Koiter}$ that satisfies the strong ellipticity condition. The fulfillment of the strong ellipticity condition by $U_{HP}$ played a crucial role in demonstrating that the ensuing fracture model generated bounded stresses and strains up to the crack tips. However, the connection of $U_{HP}$ to a parent three-dimensional theory, in the spirit of work discussed in the previous section, is unclear. The primary motivation and outcome of this work involve the introduction of an attractive alternative quadratic stored surface energy, denoted as $U$, such that:
\begin{itemize}
\item $U$ satisfies the strong ellipticity condition and can be used in conjunction with \cite{Rodriguez2023StrainGradient} to eliminate the singularities present in linear elastic fracture mechanics. 
\item $U$ is also expressed in terms of (conceivably measurable) physical properties of the plate.
\item $U$ is derived from a parent three-dimensional theory, in the spirit of \eqref{eq:Koiterderiv}.
\end{itemize}
\subsection{Main results and outline}

For simplicity, in this work we will only consider the case of plates $\cl B = \cl S \times [-h/2,h/2]$, but we expect our results can be generalized to shells with curved midsurfaces by using more differential geometric machinery (see, e.g., \cite{Ciarlet05Book, Steig13}). In Section 2, we first discuss the necessary kinematics and set-up for our study. We then formally argue via averaging linearized lattice dynamics that for a plate with kinetic and stored energies \eqref{eq:kinen} and \eqref{eq:poten}, reasonable identifications of the length scales are $\ell_s^2 = d^2/12$ and $\ell_k^2 = d^2/6,$ where $d$ is the inter-particle spacing of the physical plate in its natural configuration. 

In Section 3, we perform an asymptotic expansion-in-thickness to obtain cubic order expressions for the stored surface energy and kinetic energy of the plate's midsurface in the spirit of \eqref{eq:Koiterderiv}.  Based on the identifications indicated in Section 2, we assume that $\ell_s^2 + \ell_k^2 \leq C_0 h^2$ where $C_0$ is a fixed constant. This mathematical restriction reflects the reasonable physical assumption that the plate's thickness is not drastically smaller than the inter-particle spacing. We assume the motion $\bs \chi$ has small strain and normal components of the second gradient of the velocity (relative to fixed length and time scales) when evaluated on the midplane. We then prove that the associated stored surface energy per unit reference area, $\int_{-h/2}^{h/2} W \, dZ$, and midsurface kinetic energy per unit reference area, $\int_{-h/2}^{h/2} \kappa_R \, dZ$, where $W$ is the integrand appearing in \eqref{eq:poten} and $\kappa_R$ is the integrand appearing in \eqref{eq:kinen}, satisfy
\begin{gather}
	\int_{-h/2}^{h/2} W \, dZ = 
	h \Bigl [ \frac{\lambda \mu}{\lambda + 2\mu} (\tr \msE)^2 + \mu |\msE|^2  
	+ \ell_s^2 \sum_{\gamma = 1}^2 \Bigl ( \frac{\lambda \mu}{\lambda + 2\mu} (\tr \p_\gamma \msE)^2 + \mu |\p_\gamma \msE|^2 \Bigr ) \Bigr ] \\
	+ 
	\Bigl (\frac{h^3}{24} + h \ell_s^2 \Bigr ) \Bigl ( \frac{\lambda \mu}{\lambda + 2\mu} (\tr \msK)^2 + \mu |\msK|^2 \Bigr ) + O(h^4), \label{eq:stoexp}
\end{gather}
and 
\begin{gather}
	\int_{-h/2}^{h/2} \kappa_R \, dZ = \frac{1}{2} h \rho_R \Bigl (
	|\p_t \bs y|^2 + \frac{h^2+12\ell_k^2}{12} \frac{\la^2}{(\la + 2\mu)^2} |\tr \p_t \msE|^2 + \frac{h^2+12 \ell_k^2}{12} |\p_t \bs n|^2 \\ + \ell_k^2 |\nabla \p_t \bs y|^2 
	\Bigr ) + O(h^{7/2}). \label{eq:kinexp}
\end{gather}
Here, $\bs y = \bs \chi |_{Z = 0}$ and the gradient operator is with respect to the midsurface variables: see Theorem \ref{t:t1} for the precise statement.  We denote by $U$ and $K$ the leading order expressions appearing in \eqref{eq:stoexp} and \eqref{eq:kinexp} respectively, and we observe that they are characterized by five physical properties of the plate by using our identifications for the length scales: its reference density $\rho_R$, Young's modulus $E = \frac{\mu(2\mu+3\lambda)}{\mu + \lambda}$, Poisson's ratio $\nu = \frac{\lambda}{2(\mu + \lambda)}$, thickness $h$, and natural configuration's inter-particle spacing $d$ (see \eqref{eq:surfacestored} and \eqref{eq:surfacekinetic}. In the singular limiting case $\ell_s = 0$, we recover Koiter's classical shell energy \eqref{eq:Koiter}, and in the case $\ell_k = 0$ we recover Hilgers and Pipkin's kinetic energy \eqref{eq:KHP}. We conclude Section 3 by showing that $U$ satisfies the strong ellipticity condition (rather than the weaker Legendre-Hadamard condition) precisely when $\ell_s > 0$ (see Proposition \ref{p:p1}). 

In Section 4, we adopt $U$ and $K$ as the stored surface energy and kinetic energy densities of an elastic material surface, respectively. By applying Hamilton's variational principle, we establish the field equations governing the motion of the surface. Our analysis then focuses on infinitesimal harmonic plane waves, providing a detailed comparison between the setting when $\ell_s$ and $\ell_k$ are non-zero versus the case where $\ell_s = \ell_k = 0$. Specifically, waves moving in all directions (longitudinal, tangentially transverse, and normally transverse) exhibit dispersion. Unlike the scenario with $\ell_s = \ell_k = 0$, the phase velocity of longitudinal waves is now consistently bounded below by a fixed positive constant for all large wave numbers. 

\begin{figure}[t]
	\centering
	\includegraphics[scale=.4]{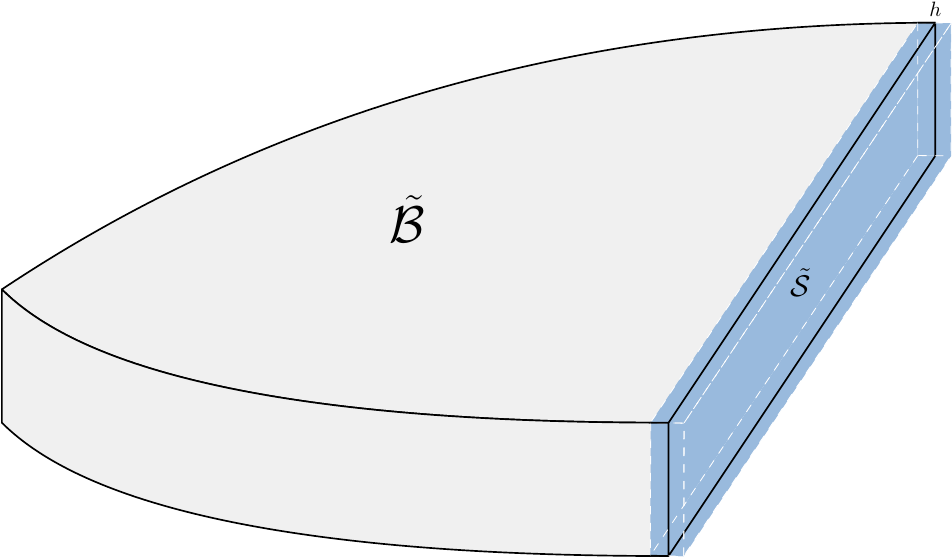}
	\caption{A body primarily made of a Green elastic material with an additional thin, gradient elastic region of thickness $h$ extending from a section of its boundary and our modeling scheme.}
	\label{f:1}
\end{figure}

In the final section, we show how utilizing the stored energy density $U$ in the framework proposed by the author in \cite{Rodriguez2023StrainGradient} eliminates the singularities in stresses and strains observed in linear elastic fracture mechanics, particularly in the context of mode-III fracture. The overall model can be physically understood as representing a body primarily made of a Green elastic material with an additional thin, gradient elastic region of thickness $h$, extending from a section of its boundary. More precisely, we model such a body by a Green elastic solid $\tilde{\cl B}$ possessing a gradient elastic boundary surface, $\tilde{\cl S}$, with stored surface energy density $U$; see Figure \ref{f:1} for a schematic. In particular, the fracture model is parameterized by the three-dimensional solid's Young's modulus, Poisson's ratio, and natural inter-particle spacing, and the sole parameter left undetermined is the thickness of the region emanating from the crack front where gradient elastic, small-scale processes become significant.

\subsection*{Acknowledgments} The author was supported by NSF Grant DMS-2307562.

\section{Preliminaries}

\subsection{Kinematics}
Let $\cl B = \cl S \times  [ -h/2, h/2 ] \subseteq \bbE^3$ where $\cl S$ is a smooth domain in $\bbR^2$. The domain $\cl B$ is the reference configuration of a three-dimensional plate, and $\cl S$ is its two-dimensional \emph{midsurface}. For a reference element with position vector $\bs X = X^a \bs e_a \in \cl B$, we write $\bs X = \bs Y + Z \bs e_3 = Y^\al \bs e_\al + Z \bs e_3$ where Greek indices range over $\{ 1, 2 \}$.   

For a smooth motion $\bs x = \hat{\bs \chi} = \hat \chi^i \bs e_i : \cl B \times [t_0, t_1] \rar \bbE^3$, we denote the deformation gradient and second deformation gradient by
\begin{gather}
	\hat{\bs F} = \p_a \hat \chi^i \bs e_i \tens \bs e^a, \quad \hat{\bs A} = \p^2_{ab} \hat \chi^i \bs e_i \tens \bs e^a \tens \bs e^b, 
\end{gather} 
where $\bs e^a = \bs e_a$ for $a = 1,2$, and $\p_a = \frac{\p}{\p X^a}$. The velocity field in Eulerian coordinates $\bs x$ is denoted by $\hbv(\bs x, t) = \p_t \hbchi (\hbchi^{-1}(\bs x), t)$. The Green-Saint-Venant strain tensor is given by $\hbE = \frac{1}{2}(\hbF^T \hbF - \bs I)$. We denote the normal to the convected midsurface $\hbchi(\cl S)$ by 
\begin{align}
	\bs n = |\p_1 \hbchi \times \p_2 \hbchi|^{-1} \p_1 \hbchi \times \p_2 \hbchi \big |_{Z = 0}, 
\end{align} and the surface Green-Saint-Venant strain tensor $\msE$ and relative curvature tensor $\msK$ for the midsurface are defined by 
\begin{gather}
	\msE = \p_{\al} \hbchi \cdot \p_\beta \hbchi \big |_{Z = 0} \bs e^\al \tens \bs e^\beta = \hat E_{\al \beta}\big |_{Z = 0} \bs e^\al \tens \bs e^\beta, \\
	\msK = \bs n \cdot \p_{\al \beta}^2 \hbchi \big |_{Z = 0} \bs e^\al \tens \bs e^\beta.  
\end{gather}
In what follows, the absence of a caret indicates that the quantity is evaluated at $Z = 0$, e.g., $E_{\al \beta} = \hat{E}_{\al \beta} |_{Z = 0}$, and indices are raised and lowered using the flat metric on $\bbR^3$.   

\subsection{Stored energy and kinetic energy assumptions}

We assume that the plate $\cl B$ is homogeneous, isotropic, the reference configuration is its natural configuration, and the stored energy per unit reference volume is of the simple strain-gradient form,  
\begin{gather}
	W = \frac{\la}{2} (\tr \hbE)^2 + \mu |\hbE|^2 + \ell_s^2 \sum_{c = 1}^3 \Bigl ( \frac{\la}{2} (\p_c \tr \hbE)^2 + \mu |\p_c \hbE|^2 \Bigr ), \label{eq:stored}
\end{gather}
where $\la$ and $\mu$ are the usual Lam\'e parameters of the material and $\ell_s$ is an internal length parameter. We denote
\begin{align}
	W_s = \frac{\la}{2} (\tr \hbE)^2 + \mu |\hbE|^2, \quad W_{sg} = \sum_{c = 1}^3 \Bigl ( \frac{\la}{2} (\tr \p_c \hbE)^2 + \mu |\p_c \hbE|^2 \Bigr ).
\end{align}
We will also assume that the kinetic energy per unit reference volume $\kappa_R$ of the plate includes a simple velocity gradient contribution,  
\begin{align}
	\kappa_R = \frac{1}{2} \rho_R \Bigl ( |\p_t \hbchi|^2 + \ell_k^2 |\p_t \hbF (\hbF)^{-1}|^2 \Bigr ), \label{eq:kin}
\end{align}
where $\rho_R$ is the constant, reference mass density and $\ell_k$ is a second internal length parameter. We note that in terms of the velocity field $\hbv$, the kinetic energy per unit current volume, $\kappa$, takes a more conventional looking form    
\begin{gather}
		\kappa = \frac{1}{2} \rho \Bigl ( |\bs \hbv|^2 + \ell_k^2 |\mbox{grad}\, \hbv|^2 \Bigr ), \label{eq:kincurr} \\
	\rho = \rho_R (\det \hbF)^{-1}, \quad \mbox{grad}\, \hbv = \frac{\p \hat{v}^i}{\p x^j} \bs e_i \tens \bs e^j.
\end{gather}
As discussed in the introduction, the stored energy \eqref{eq:stored}, kinetic energy \eqref{eq:kin} (equivalently \eqref{eq:kincurr}), and the identification of the additional length parameters $\ell_s$ $\ell_k$ have been intensely studied for several decades.

We define the following second-order Piola-Kirchhoff tensors $\hbP$ and $\hbP_s$ corresponding to the stored energies $W$ and $W_s$ by 
\begin{gather}
	\hbP = \Bigl (\frac{\p W}{\p \hat{F}^i_a} - \frac{\p}{\p X^b} \frac{\p W}{\p \hat{A}^i_{ab} } + \frac{\p}{\p t} \frac{\p \kappa_R}{\p \dot{\hat{F}}^i_{a}} \Bigr ) \bs e^i \tens \bs e_a, \quad 
		\hbP_s = \frac{\p W_s}{\p \hat{F}^i_a} \bs e^i \tens \bs e_a, \label{eq:Pdefinition}
\end{gather}
where $\dot{\hat{F}}^i_{a} = \p_t \hat{F}^i_{a}$. As is well-known (see e.g., \cite{TruesdellNollNLFT, Toupin64}), for a body with stored energy $W$ (or $W_s$), the resultant contact force  exerted on a subdomain $\cl R \subseteq \cl B$ by $\cl B \backslash \cl R$ is given by 
$\bs F(\cl R) = \int_{\cl R} \hbP \bs N \, dA$, where $\bs N$ is the unit normal vector field on $\p \cl R$ pointing into $\cl B \backslash \cl R$.{\footnote{The expression for $\hbP_s$ is classical while the static expression for $\hbP$ can be found in Section 10 of \cite{Toupin64}. The dynamic expression for $\hbP$ can be deduced via similar integration by parts calculations as in \cite{Toupin64}.}  From our expressions for $W$ and $\kappa_R$, we have 
\begin{align}
	\hbP = \hbP_s + O(\ell_s^2 + \ell_k^2),\label{eq:PPscomparison}
\end{align}
where the big-oh term depends on the size of $\| \hbchi \|_{C^3(\cl B \times [t_0,t_1])}$.

\subsection{Identification of the length parameters via averaging lattice dynamics}
We now give a \emph{formal} argument for averaging lattice dynamics that suggests the identifications 
\begin{align}
	\ell_s = \frac{1}{\sqrt{12}} d, \quad \ell_k = \frac{1}{\sqrt{6}} d, \label{eq:specificls}
\end{align}
where $d > 0$ is the inter-particle spacing for the plate's natural, unstressed configuration.

We model the plate of thickness $h$ by a three-dimensional lattice $\cl L_d$ of identical particles with spacing $d > 0$, each with mass $M$, and with reference positions 
\begin{align}
	\bs X_{d,\bs m} = (m_1 d, m_2d, m_3 d), \quad \bs m = (m_1, m_2, m_3) \in \bbZ^3. 
\end{align}
Let $\bs u_{d,\bs m}(t)$ be the displacement of the particle with reference position $\bs X_{d,\bs m}$ at time $t$, and we assume that the the resultant force $\bs F_{d,\bs m}$ exerted on the particle with reference position $\bs X_{d,\bs m}$ is a superposition of forces exerted by its nearest neighbors of linear Hookean type, 
\begin{align}
	\bs F_{d,\bs m} = k_d \sum_{j = 1}^3 \sum_{\pm} (\bs u_{d,\bs m \pm d\bs e_j} - \bs u_{d, \bs m}), \label{eq:resultant}
\end{align} 
where $k_d > 0$. The dynamics of the lattice are then determined by Newton's laws of motion 
\begin{align}
	M \ddot{\bs u}_{d,\bs m}(t) = k_d \sum_{j = 1}^3 \sum_{\pm} (\bs u_{d,\bs m \pm d\bs e_j}(t) - \bs u_{d, \bs m}(t)), \quad \bs m \in \bbZ^3. \label{eq:newtonlattice}
\end{align}

We now assume that there exists a vector field $\bs u(\cdot, t) : \cl B \rar \bbR^3$ such that for each time $t$, and $\bs m \in \bbZ^3$, 
\begin{align}
	\bs u_{d,\bs m}(t) = \bs u(\bs X_{d,\bs m}, t) + o(d^4), 
\end{align} 
We interpret $\bs u(\bs X,t)$ as the displacement at time $t$ of the particle with reference position $\bs X$ in the homogenized plate modeled by $\cl B$. For $d > 0$, the \emph{average displacement} of the $d$-cell centered at $\bs X$ is defined by  
\begin{align}
	\msu(\bs X,t;d) = \frac{1}{8 d^3} \int_{X_3 - d}^{X_3 + d} \int_{X_2 - d}^{X_2 + d} \int_{X_1 - d}^{X_1 + d} \bs u(\tilde{\bs X},t) \, dv. 
\end{align}
For a shearing motion with $\bs u_{d,\bs m} = u_{d,m_1} \bs e_2$ and $\bs u(\bs X,t) = u(X^1,t) \bs e_2$, we have that for all $\bs m \in \bbZ^3$,  
\begin{align}
	\bs F_{d,\bs m} = k_d [u(m_1d + d, t) + u(m_1d - d,t) - 2u(m_1d, t) + o(d^4)] \bs e_2 . \label{eq:resultantshear}
\end{align}  

In what follows, we denote $X^1$ by $X$, $m_1$ by $m$ and $X_{d,m} = md$. By Newton's laws, \eqref{eq:newtonlattice}, and \eqref{eq:resultantshear}, the motion satisfies for all $m \in \bbZ, t \in \bbR$, 
\begin{gather}
	M [\p_t^2 u(X_{d,m},t)+o(d^4)] \\ = k[u(X_{d,m+1},t) + u(X_{d,m-1},t) - 2 u(X_{d,m},t) + o(d^4)]. \label{eq:newton}
\end{gather}
The average displacement $\msu = \su(X,t;d) \bs e_2$ with $\su(X,t;d) = \frac{1}{2d} \int_{X-d}^{X+d} u(\tilde X,t) d\tilde X$, and by Taylor's theorem, for all $m \in \bbZ$, 
\begin{gather}
	\p_t^2 \su(X_{d,m},t;d) = \p_t^2 u(X_{d,m},t) + \frac{d^2}{6} \p_t^2 \p_X^2 u(X_{d,m},t) + O(d^4), \\
	\p_X^2 \su(X_{d,m},t;d) = \p_X^2 u(X_{d,m},t) + \frac{d^2}{6} \p_X^4 u(X_{d,m},t) + O(d^4), \\
	\p_X^4 \su(X_{d,m},t;d) = \p_X^4 u(X_{d,m},t) + O(d^2), \\
	u(X_{d,m+1},t) + u(X_{d,m-1},t) - 2 u(X_{d,m},t) = d^2 \Bigl ( \p_X^2 u(X_{d,m},t) \\ + \frac{1}{12}d^2 \p_X^4 u(X_{d,m},t) + O(d^4)\Bigr ).
\end{gather}
Thus, by \eqref{eq:newton} and for all $X \in \{ md \mid m \in \bbZ\}$,
\begin{align}
	\rho_d \Bigl (
	\p_t^2 \su(X,t;d) - \frac{d^2}{6} &\p_X^2 \p_t^2 \su(X,t;d) + o(d^2)
	\Bigr ) \\&= G_d \Bigl ( \p_X^2 \su(X,t;d) - \frac{d^2}{12} \p_X^4 \su(X,t;d) + o(d^2) \Bigr ). \label{eq:homogeq}
\end{align}
where $\rho_d = M/dh^2$ is the mass density per unit reference volume and $G_d = k_dd/h^2$ is the shear modulus.  

We now assume that the plate modeled via $\cl B$ has stored energy and kinetic energy densities $W$ and $\kappa_R$. The nonlinear field equations governing motion of the body are then derived via Hamilton's variational principle applied to the action $\cl A([t_0,t_1]) = \int_{t_0}^{t_1} \int_{\cl B} [\kappa_R - W] dV dt$. One may then verify that the linearized equations governing infinitesimal shearing of $\cl B$,
$
	\hbchi = \bs X + w(X^1,t) \bs e_2,
$
are given by
\begin{align}
		\p_t^2 w(X,t) - \ell_k^2 \p_X^2 \p_t^2 w(X,t) = c^2_T 
		\Bigl ( \p_X^2 w(X,t) - \ell_s^2 \p_X^4 w(X,t) \Bigr ) \label{eq:longeq}
\end{align} 
where $c^2_T = \mu/\rho_R$ is the shear speed. Assuming that the continuum model consistently approximates the average lattice dynamics in the sense that $G_d/\rho_d \rar c^2_T$ and for $d > 0$, $w = \su$ solves \eqref{eq:longeq} up to an $o(d^2)$ error, then \eqref{eq:homogeq} and \eqref{eq:longeq} imply the identifications 
\begin{align}
	\ell_s^2 = \frac{d^2}{12}, \quad \ell_k^2 = \frac{d^2}{6}. 
\end{align} 

\section{Surface energies obtained as the leading order expressions from the plate energies}

In this section, we obtain leading cubic order-in-$h$ expressions for the stored surface and kinetic energy densities 
\begin{align}
	\int_{-h/2}^{h/2} W \, dZ \quad \mbox{and} \quad \int_{-h/2}^{h/2} \kappa_R \, dZ, \label{eq:storedkinetic}
\end{align}
corresponding to motions with Taylor series
\begin{align}
\hbchi(\bs X, t) = \bs y(\bs Y, t) + \bs d(\bs Y, t) Z + \bs g(\bs Y, t) \frac{Z^2}{2} + \bs h(\bs Y, t) \frac{Z^3}{6} + O(Z^4). \label{eq:chiexp}
\end{align}
Here, $\bs y = \hat{\bs \chi} |_{Z = 0}$ is the motion of the plate's midsurface $\cl S$, and the leading order expressions obtained involve kinematic quantities associated to $\bs y$ (see Theorem \ref{t:t1}).

The deformation gradient and second deformation gradient satisfy
\begin{align}
\hbF &= \grad \bs y + \bs d \tens \bs e_3 + (\nabla \bs d + \bs g \tens \bs e_3) Z 
+ (\nabla \bs g + \bs h \tens \bs e_3)\frac{Z^2}{2} + O(Z^3), \\
\hbA &= \nabla \nabla \bs y + (\nabla \bs d \tens \bs e_3)^T + \nabla \bs d \tens \bs e_3 + 
\bs g \tens \bs e_3 \tens \bs e_3 + O(Z),
\end{align}
where $\nabla$ is the gradient operator with respect to $\bs Y$,
\begin{align}
\forall \bs a = a^i \bs e_i, \quad \nabla \bs a = \p_{\al} a^i \bs e_i \tens \bs e^\al, \quad \nabla \nabla \bs a = \p_{\beta} \p_{\al} a^i \bs e_i \tens \bs e^\al \tens \bs e^\beta,
\end{align}
and $(\bs a \tens \bs b \tens \bs c)^T = \bs a \tens \bs c \tens \bs b$. Throughout this paper, we will also often use the notation $\bs a_{,\al} = \p_\al \bs a$ where $\al = 1, 2$. For the following discussion, we define the following quantities,
\begin{gather}
	\bs F^{(n)} = \p_{Z}^n \hbF |_{Z = 0}, %\bs A^{(n)} = \p_{Z}^n \hbA |_{Z = 0}, \\
	\quad \bs P_s^{(n)} = \p_{Z}^n \hbP_s(\hbF) |_{Z = 0}, 
\end{gather}
so that 
\begin{gather}
	\bs F = \nabla \bs y + \bs d \tens \bs e_3, \quad \bs F' = \nabla \bs d + \bs g \tens \bs e_3, \quad 	\bs F'' = \nabla \bs g + \bs h \tens \bs e_3, \\
\bs A = \nabla \nabla \bs y + (\nabla \bs d \tens \bs e_3)^T + \nabla \bs d \tens \bs e_3 + 
\bs g \tens \bs e_3 \tens \bs e_3.
\end{gather}

\subsection{Size assumption between the length parameters and plate thickness} 

For the remainder of this work, we fix characteristic length and time scales $L$ and $T$, and we will assume that $h/L$ is much smaller than unity. In addition, we will later assume that the normal components of the second gradient of the velocity evaluated on the midsurface satisfies $O(h/L^2 T)$ (to be made more precise below, see Theorem \ref{t:t1}). We adopt $T$ as our unit of time and $L$ as our measure of length. After proper nondimensionlization and relabeling of the variables, we may assume that $\bs X$ and $t$ are dimensionless, $h$ is a dimensionless small parameter. Finally, we will assume that there exists a dimensionless constant $C_0 > 0$ such that
\begin{align}
	\ell_s^2 + \ell_k^2 \leq C_0 h^2. \label{eq:hellcomp}
\end{align}
The identifications made in the previous section (\eqref{eq:specificls}) suggest that \eqref{eq:hellcomp} reflects the reasonable physical assumption that the plate's thickness is not significantly less than the natural inter-particle spacing.

\subsection{Parameterizing the motions}

Following the approaches of Hilgers and Pipkin \cite{HilgPip92b, HilgPip96, HilgPip97} and Steigmann \cite{Steig13}, we require that the generalized tractions $$\hbP \bs e_3 |_{Z = \pm h/2}$$ vanish to first order in $h$. The resulting conditions imposed on $\bs d$ and $\bs g$ are as follows. By \eqref{eq:PPscomparison} and \eqref{eq:hellcomp},
\begin{align}
	\hbP \bs e_3 |_{Z = \pm h/2} 
	&= \bs P_s \bs e_3 \pm \bs P_s' \bs e_3 \frac{h}{2} + O(\ell_s^2 + \ell_k^2) + O(h^2) \\
	&= \bs P_s \bs e_3 \pm \bs P_s' \bs e_3 \frac{h}{2} + O(h^2),
\end{align}
where the final big-oh term depends on $\| \hbchi \|_{C^3(\cl B \times [t_0,t_1])}$.
Therefore, $\hbP \bs e_3 |_{Z = \pm h/2} = \bs 0$ to first order in $h$ if and only if $\bs P_s \bs e_3 = \bs 0$ and $\bs P_s' \bs e_3 = \bs 0$. 
We compute 
\begin{align}
	\bs P_s \bs e_3 = \frac{\p W_s}{\p E_{c3}} \p_c \bs \chi, \quad  
	\bs P_s' \bs e_3 = \frac{\p^2 W_s}{\p E_{ab} \p E_{c3}} \p_3 E_{ab} \p_c \bs \chi + \frac{\p W_s}{\p E_{c3}} \p^2_{c3} \bs \chi,  
\end{align}
and thus, $\bs P_s \bs e_3 = \bs 0$ and $\bs P_s' \bs e_3 = \bs 0$ if and only if for $c = 1, 2, 3$ we have
\begin{align}
	\frac{\p W_s}{\p E_{c3}} = 0, \quad \frac{\p^2 W_s}{\p E_{ab} \p E_{c3}} \p_3 E_{ab} = 0. \label{eq:U1relations}
\end{align}
The relations \eqref{eq:U1relations} are equivalent to
\begin{gather}
	E_{\al 3} = 0, \quad \al = 1,2, \quad E_{33} = -\frac{\la}{\la + 2\mu} \tr \msE, \label{eq:Erelations}\\
	\p_3 E_{\al 3} = 0, \quad \al = 1,2, \quad \p_3 E_{33} = -\frac{\la}{\la + 2\mu} \p_3 \tr \msE, \label{eq:p3Erelations}
\end{gather}
and thus, 
\begin{gather}
	\bs d = \phi \bs n, \quad
	\phi^2 = 1 + 2 E_{33} = 1 - \frac{2\lambda}{\lambda + 2\mu} \tr \msE, \label{eq:bard}
\end{gather}
The precise form of $\bs g$ will be unnecessary, but it will be useful to have it's structure on hand. Let $\tensor{M}{_i^k_j^l} = \frac{\p^2 W_s}{\p F^i_k \p F^j_l}$ and $A_{ij} = \tensor{M}{_i^3_j^3}$. We note that $\bs A = A_{ij} \bs e^i \tens \bs e^j$ is invertible for small $h$ since $W_s$ is convex in a neighborhood of $\bs E = \bs 0$. Then $\bs P_s' \bs e_3 = \bs 0$ is equivalent to 
\begin{align}
	g^i = -(A^{-1})^{ij} \tensor{M}{_j^3_k^\al} \p_\al d^k.  \label{eq:formofg}
\end{align}  
 
\subsection{Leading cubic order-in-$h$ expressions for the surface energies} 

Our main result of this section is the following cubic order-in-$h$ expansion of the stored surface and kinetic energies from \eqref{eq:storedkinetic}.   
\begin{thm}\label{t:t1}
Let $C_0, C_1$, and $C_2$ be fixed positive numbers, and assume that 
\begin{align}
	\ell_s^2 + \ell_k^2 \leq C_0 h^2. \label{eq:dhcomp}
\end{align}
Let $\hbchi \in C^3(\cl B \times [t_0, t_1])$ be a motion such that for each $t \in [t_0, t_1]$, $\hbchi(\cdot,t)$ is an immersion satisfying \eqref{eq:chiexp}, \eqref{eq:Erelations}, \eqref{eq:p3Erelations}. Assume that $\hbchi$ satisfies the a priori bounds
\begin{gather}
	\| \hbchi \|_{C^3(\cl B \times [t_0, t_1])} \leq C_1, \label{eq:chibound} \\
	\quad \forall \bs Y \in \bbR^2, t \in [t_0, t_1], \quad \bigl |\mathrm{Grad} \, \p_t {\bs F}(\bs Y,t)[\bs N \tens \bs N] \bigr | \leq C_2 h, \quad |\msE(\bs Y,t)| \leq C_2 h, \qquad \label{eq:strainconst2}   
\end{gather}
where $\bs N = \bs e_3$ is the (constant) normal vector field on the midplane.
  
Then 
\begin{gather}
	\int_{-h/2}^{h/2} W \, dZ = 
	h \Bigl [ \frac{\lambda \mu}{\lambda + 2\mu} (\tr \msE)^2 + \mu |\msE|^2  
	+ \ell_s^2 \sum_{\gamma = 1}^2 \Bigl ( \frac{\lambda \mu}{\lambda + 2\mu} (\tr \p_\gamma \msE)^2 + \mu |\p_\gamma \msE|^2 \Bigr ) \Bigr ] \\
	+ 
	\Bigl (\frac{h^3}{24} + h \ell_s^2 \Bigr ) \Bigl ( \frac{\lambda \mu}{\lambda + 2\mu} (\tr \msK)^2 + \mu |\msK|^2 \Bigr ) + O(h^4), \label{eq:storedexpansion}
\end{gather}
and 
\begin{gather}
	\int_{-h/2}^{h/2} \frac{1}{2} \rho_R \Bigl (
	|\p_t \hbchi|^2 + \ell_k^2 |\p_t \hbF (\hbF)^{-1}|^2 
	\Bigr ) dZ \\ = \frac{1}{2} \rho_s \Bigl (
	|\p_t \bs y|^2 + \frac{h^2+12\ell_k^2}{12} \frac{\la^2}{(\la + 2\mu)^2} |\tr \p_t \msE|^2 + \frac{h^2+12 \ell_k^2}{12} |\p_t \bs n|^2 \\ + \ell_k^2 |\nabla \p_t \bs y|^2 
	\Bigr ) + O(h^{7/2}), \label{eq:kineticexpansion}
\end{gather}
where $\rho_s = h \rho_R$ and the $O(\cdot)$ terms depend only on $C_0, C_1$, $C_2$, $\la$, $\mu$ and $\rho_R$. 
\end{thm}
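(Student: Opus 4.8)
The plan is to expand both integrands of \eqref{eq:storedkinetic} in powers of $Z$ using the Taylor data \eqref{eq:chiexp}, integrate term-by-term over $[-h/2,h/2]$ (so that all odd powers of $Z$ drop out), and then use the constitutive constraints \eqref{eq:Erelations}, \eqref{eq:p3Erelations} together with the size hypotheses \eqref{eq:dhcomp}, \eqref{eq:chibound}, \eqref{eq:strainconst2} to discard all contributions beyond the stated orders. Concretely, for the stored energy I would first write $\hbE(\bs X,t) = \bs E + \bs E' Z + \tfrac12 \bs E'' Z^2 + O(Z^3)$ with $\bs E^{(n)} = \p_Z^n \hbE|_{Z=0}$ computed from $\hbF$ via $\hbE = \tfrac12(\hbF^T\hbF - \bs I)$, and likewise expand $\p_c \hbE$. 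Since $W = W_s(\hbE) + \ell_s^2 W_{sg}$ is quadratic in $(\hbE, \grad \hbE)$, the integrand is a polynomial in $Z$ whose coefficients are explicit quadratic forms in $\bs E, \bs E', \bs E'', \nabla\bs E,\dots$; integrating over thickness kills odd powers and produces
\begin{align*}
\int_{-h/2}^{h/2} W\,dZ = h\,W_s(\bs E) + \tfrac{h^3}{24}\bigl[\text{quadratic form in }\bs E'\bigr] + h\,\ell_s^2 W_{sg}(\bs E) + \ell_s^2 \tfrac{h^3}{24}[\cdots] + \bigl(\text{higher order}\bigr).
\end{align*}
The point is then to identify $W_s(\bs E) = \tfrac{\lambda\mu}{\lambda+2\mu}(\tr\msE)^2 + \mu|\msE|^2$ and the $\bs E'$-quadratic form with $\tfrac{\lambda\mu}{\lambda+2\mu}(\tr\msK)^2 + \mu|\msK|^2$. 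For the first identity one substitutes \eqref{eq:Erelations}: since $E_{\alpha 3}=0$ and $E_{33}$ is slaved to $\tr\msE$, the three-dimensional quadratic form $\tfrac{\lambda}{2}(\tr\hbE)^2+\mu|\hbE|^2$ restricted to this subspace collapses exactly to the reduced Lamé form with $\tfrac{\lambda\mu}{\lambda+2\mu}$ in place of $\tfrac{\lambda}{2}$ — this is the same plane-stress reduction used by Steigmann. For the $\bs E'$ term one uses that $\p_Z E_{\alpha\beta}|_{Z=0} = \bs n\cdot\p^2_{\alpha\beta}\bs y + (\text{terms involving }\grad\bs y\cdot\nabla\bs d)$; modulo the small-strain bound $|\msE|\le C_2 h$ the tangential-gradient corrections are $O(h)$, so $\bs E'_{\alpha\beta} = \sK_{\alpha\beta} + O(h)$, and combined with \eqref{eq:p3Erelations} the same Lamé reduction applies, yielding $\tfrac{h^3}{24}$ of the curvature form. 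The $\ell_s^2$ pieces: $\ell_s^2 W_{sg}(\bs E)$ produces $h\ell_s^2\sum_{\gamma=1}^2\bigl(\tfrac{\lambda\mu}{\lambda+2\mu}(\tr\p_\gamma\msE)^2 + \mu|\p_\gamma\msE|^2\bigr)$ after the same reduction applied to $\p_\gamma\hbE$ (noting $\p_3\hbE$ contributes to the next order), while $\ell_s^2\cdot\tfrac{h^3}{24}\times(\cdots)$ is $O(h^5)$ by \eqref{eq:dhcomp} hence absorbed; the cross term from $\p_3$-differentiation of $\hbE$ in $W_{sg}$ gives $\ell_s^2\cdot h\times(\bs E'\text{-form}) = h\ell_s^2\bigl(\tfrac{\lambda\mu}{\lambda+2\mu}(\tr\msK)^2+\mu|\msK|^2\bigr)$ up to $O(h^4)$, which is precisely the $h\ell_s^2$ addition to the $\tfrac{h^3}{24}$ coefficient. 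All remainder terms are controlled by \eqref{eq:chibound} and pushed into $O(h^4)$ using \eqref{eq:dhcomp} and $|\msE| = O(h)$.

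For the kinetic energy I would similarly expand $\p_t\hbchi = \p_t\bs y + \p_t\bs d\,Z + O(Z^2)$ and $\p_t\hbF(\hbF)^{-1} = (\grad\p_t\bs y + \p_t\bs d\tens\bs e_3 + O(Z))(\hbF)^{-1}$, then integrate. The $\rho_R|\p_t\hbchi|^2$ term gives $h\rho_R|\p_t\bs y|^2 + \tfrac{h^3}{24}\rho_R|\p_t\bs d|^2 + O(h^{?})$; since $\bs d = \phi\bs n$ with $\phi = 1 + O(h)$ (from \eqref{eq:bard} and $|\msE|=O(h)$), one has $\p_t\bs d = \p_t\bs n + O(h)\cdot(\text{time derivatives})$, so $\tfrac{h^3}{24}\rho_R|\p_t\bs d|^2 = \tfrac{h^3}{24}\rho_R|\p_t\bs n|^2 + (\text{lower order})$; meanwhile the $\phi$-variation contributes $\tfrac{h^3}{12}\cdot\tfrac{\lambda^2}{(\lambda+2\mu)^2}|\tr\p_t\msE|^2$ via $\p_t\phi = -\tfrac{\lambda}{\lambda+2\mu}\tr\p_t\msE + O(h)$ — these assemble into the $\tfrac{h^2}{12}$ coefficients in \eqref{eq:kineticexpansion}. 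The $\ell_k^2$ velocity-gradient term gives at leading order $h\ell_k^2\rho_R|\grad\p_t\bs y + \p_t\bs d\tens\bs e_3|^2 = h\ell_k^2\rho_R(|\nabla\p_t\bs y|^2 + |\p_t\bs d|^2)$; again $|\p_t\bs d|^2 = |\p_t\bs n|^2 + \tfrac{\lambda^2}{(\lambda+2\mu)^2}|\tr\p_t\msE|^2 + O(h)$ — but one must be careful, since $\bs d = \phi\bs n$, $|\p_t\bs d|^2 = (\p_t\phi)^2 + \phi^2|\p_t\bs n|^2 = \tfrac{\lambda^2}{(\lambda+2\mu)^2}|\tr\p_t\msE|^2 + |\p_t\bs n|^2 + O(h)$ (using $\bs n\cdot\p_t\bs n = 0$) — and the $\ell_k^2$ term thus contributes $h\ell_k^2\rho_R\bigl(|\nabla\p_t\bs y|^2 + |\p_t\bs n|^2 + \tfrac{\lambda^2}{(\lambda+2\mu)^2}|\tr\p_t\msE|^2\bigr)$, which merges with the $\tfrac{h^3}{12}$ terms to give the stated $\tfrac{h^2+12\ell_k^2}{12}$ coefficients. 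The $(\hbF)^{-1}$ Neumann expansion and the $Z$-dependence of $\hbF$ generate the $O(h^{7/2})$ error once one invokes \eqref{eq:strainconst2}: the bound on $|\mathrm{Grad}\,\p_t\bs F[\bs N\tens\bs N]| = |\p_Z\p_t\bs F'|$-type quantities is exactly what keeps the normal second-gradient-of-velocity contributions at order $h\cdot h^{?}$ rather than contaminating the leading terms; the fractional power presumably comes from a Cauchy–Schwarz/interpolation step balancing an $h^{1/2}$ from the integration against an $h^3$ from the smallness.

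The main obstacle I anticipate is the bookkeeping of exactly which correction terms from the tangential gradients $\nabla\bs d$, $\nabla\bs g$ (whose structure is only recorded implicitly via \eqref{eq:formofg}) survive at order $h^3$ versus $h^4$ — in particular verifying that the $\bs F'$-quadratic form genuinely reduces to the curvature form $\tfrac{\lambda\mu}{\lambda+2\mu}(\tr\msK)^2+\mu|\msK|^2$ and not some modification involving $\nabla\bs d$. This requires showing $\bs E'_{\alpha\beta} = \sK_{\alpha\beta}$ up to $O(h)$, which in turn uses $|\grad\bs y^T\grad\bs y - \bs I| = 2|\msE| = O(h)$ so that $\grad\bs y$ is $O(h)$-close to a rotation and $\p_\alpha\bs y\cdot\p_\beta\bs d$-type cross terms are $O(h)$ after using $\bs d = \phi\bs n = \bs n + O(h)$ and $\p_\alpha\bs y\cdot\bs n = 0$. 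A secondary delicate point is justifying the nondimensionalization so that the hypotheses \eqref{eq:chibound}–\eqref{eq:strainconst2} are dimensionally consistent and the constants in the $O(\cdot)$ terms depend only on the stated quantities; I would handle this by fixing $L,T$ at the outset as in Section 3.1 and tracking the Lamé parameters and $\rho_R$ explicitly through each quadratic form.
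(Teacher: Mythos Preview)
Your overall strategy—Taylor expand in $Z$, integrate through the thickness, and use the plane-stress constraints \eqref{eq:Erelations}, \eqref{eq:p3Erelations} to collapse the three-dimensional Lam\'e form to the reduced $\tfrac{\lambda\mu}{\lambda+2\mu}$ form—is exactly the paper's approach, and your identification of the $h\ell_s^2$ curvature contribution as coming from the $c=3$ summand in $W_{sg}$ is correct. Working directly in $\bs E$-variables for the bending term is a legitimate (and slightly cleaner) alternative to the paper's route, which instead invokes the Hilgers--Pipkin identity $E_b = \tfrac{\p^2 E_m}{\p y^j_{,\beta}\p y^k_{,\alpha}}\p_\alpha d^k\p_\beta d^j$ to eliminate $\bs g$ and then computes the Hessian of the reduced membrane energy $E_m$; both arrive at the same quadratic form in $\msK$ once you use $\p_3 E_{\alpha\beta}=-\phi\sK_{\alpha\beta}$ and $\phi^2=1+O(h)$.

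The one genuine gap is your treatment of the $O(h^{7/2})$ remainder in the kinetic energy. Your suggestion that it arises from ``a Cauchy--Schwarz/interpolation step'' or from the bound on $\mathrm{Grad}\,\p_t\bs F[\bs N\otimes\bs N]$ is not the mechanism. The paper's argument is this: from $|\bs E|=O(h)$ and the polar decomposition one has $\bs F=\bs R(\bs I+O(h^{1/2}))$ with $\bs R$ a rotation, hence $(\bs F)^{-1}=(\bs I+O(h^{1/2}))\bs R^T$ and therefore $|\p_t\bs F(\bs F)^{-1}|^2=|\p_t\bs F\bs R^T|^2+O(h^{1/2})=|\p_t\bs F|^2+O(h^{1/2})$. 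Multiplying by $\ell_k^2$ and integrating contributes $h\ell_k^2\cdot O(h^{1/2})=O(h^{7/2})$ via \eqref{eq:dhcomp}. The hypothesis $|\mathrm{Grad}\,\p_t\bs F[\bs N\otimes\bs N]|\le C_2 h$ is used separately, to bound $|\p_t\bs g|$ and thereby absorb the cross term $\tfrac{h^3}{12}\p_t\bs y\cdot\p_t\bs g$ into $O(h^4)$; it does not generate the half-power. (Also, a minor slip: your $\tfrac{h^3}{24}\rho_R|\p_t\bs d|^2$ should be $\tfrac{h^3}{12}\rho_R|\p_t\bs d|^2$, since $\int_{-h/2}^{h/2}Z^2\,dZ=\tfrac{h^3}{12}$ and the integrand has $|\p_t\bs d|^2 Z^2$, not $\tfrac12|\p_t\bs d|^2 Z^2$.)
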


\begin{proof}
We first note that \eqref{eq:chiexp}, \eqref{eq:Erelations}, and \eqref{eq:strainconst2} imply that
\begin{align}
\forall \bs Y \in \bbR^2, t \in [t_0, t_1], \quad |\p_t \bs g(\bs Y, t)| + |\bs E(\bs Y,t)| \leq \tilde C_2 h, \label{eq:strainconst}
\end{align}
where $\tilde C_2 = \max(1,\sqrt{2} \lambda /(\lambda + 2\mu)) 2C_2$. In what follows, big $O(\cdot)$ terms will depend only on $C_0, C_1$, $C_2$, $\la$, $\mu$ and $\rho$.   

Via Taylor's theorem and \eqref{eq:hellcomp} we have  
\begin{gather}
	\int_{-h/2}^{h/2}  W(\hbF, \hbA) dZ =  \int_{-h/2}^{h/2} \Bigl [ \sum_{n = 0}^2 [\p_Z^n W_s(\hbF, \hbA) |_{Z = 0}] \frac{Z^n}{n!} + O(Z^3) \Bigr ] dZ\\
	+ \int_{-h/2}^{h/2} \Bigl [ \ell_s^2 W_{sg}(\hbF, \hbA) |_{Z = 0} + O(\ell_s^2 Z) \Bigr ] dZ \\
	=  h W_s(\bs F) + h\ell_s^2 W_{sg}(\bs F, \bs A) + \frac{h^3}{24} \p_Z^2 W_s(\hbF) |_{Z = 0} + O(h^4). 
\end{gather}
Direct computation then yields
\begin{align}
	\begin{split}
		\p_Z^2 W_s(\hbF) |_{Z = 0} &= W_{s,\bs F \bs F}(\bs F)[\bs F'] \cdot \bs F' + 
		\bs P_s(\bs F) \cdot \bs F'' \\
		&= W_{s,\bs F \bs F}(\nabla \bs y + \bs d \tens \bs e_3)[\nabla \bs d + {\bs g} \tens \bs e_3]\cdot (\nabla \bs d + \bs g \tens \bs e_3)  \\
		&\quad+ \bs P_s(\nabla \bs y + \bs d \tens \bs e_3) \cdot \nabla \bs g + \bs h \cdot \bs P_s(\nabla \bs y + \bs d \tens \bs e_3)\bs e_3 \\
		&= W_{s,\bs F \bs F}(\nabla \bs y + \bs d \tens \bs e_3)[\nabla \bs d + \bs g \tens \bs e_3]\cdot (\nabla \bs d + \bs g \tens \bs e_3)  \\
		&\quad+ \bs P_s(\nabla \bs y + \bs d \tens \bs e_3) \cdot \nabla \bs g. 
	\end{split}\label{eq:enexpansion}
\end{align}

By \eqref{eq:formofg} and \eqref{eq:bard}, we have for all $\al = 1,2$,
\begin{gather}
	\bs g = \bs L[\nabla \nabla \bs y] + O(h), \quad \p_\al \bs g = \bs L[\nabla \nabla \p_\al \bs y] + O(h), \\ \p_t \bs g = 
	\bs L[\nabla \nabla \p_t \bs y] + O(h), \label{eq:ggrads}
\end{gather}
where $\bs L$ is a constant third-order tensor.  By \eqref{eq:chibound} and \eqref{eq:strainconst} we conclude that $\bs P_s(\nabla \bs y + \bs d \tens \bs e_3) \cdot \nabla \bs g = O(h)$. This fact, \eqref{eq:enexpansion}, and \eqref{eq:dhcomp} show that 
\begin{align}
	\int_{-h/2}^{h/2} W \, dZ = h E_{m} + h \ell_s^2 E_{sg} + \frac{h^3}{24} E_{b} + O(h^4), \label{eq:storedexp}
\end{align}
where 
\begin{align}
	E_m &= W_s(\nabla \bs y + \bs d \tens \bs e_3), \\
	E_{sg} &= W_{sg}(\nabla \bs y  + \bs d \tens \bs e_3, 
	\nabla \nabla \bs y + (\nabla \bs d \tens \bs e_3)^T \\ &\qquad+ \nabla \bs d \tens \bs e_3 + 
	\bs g \tens \bs e_3 \tens \bs e_3), \\
	E_b &= W_{s,\bs F \bs F}(\nabla \bs y + \bs d \tens \bs e_3)[\nabla \bs d + \bs g \tens \bs e_3]\cdot (\nabla \bs d + \bs g \tens \bs e_3).
\end{align} 

Writing 
\begin{align}
	(\tr \bs E)^2 = (\tr \msE)^2 + 2 \tr \msE \, E_{33} + E_{33}^2, \quad 
	|\bs E|^2 = |\msE|^2 + 2 E_{\al 3} E^{\al 3} + E_{33}^2
\end{align}
and inserting \eqref{eq:Erelations}, we obtain 
\begin{align}
	E_m = \frac{\lambda \mu}{\lambda + 2\mu} (\tr \msE)^2 + \mu |\msE|^2. \label{eq:E1almost}
\end{align}

To compute $E_b$ we first note that \begin{comment}by \eqref{eq:bard} and interchanging the order of differentiation, we have for $\al, \beta = 1,2$, 
\begin{align}
	\p_3 E_{\al \beta} = \frac{1}{2} (\p_\al \bs d \cdot \p_\beta \bs y + \p_\al \bs y \cdot \p_\beta \bs d)
	= -\phi \sK_{\al \beta}, \label{eq:curvature}
\end{align}
and thus,\end{comment} 
\begin{align}
	\p_\al \bs d = \p_\al \phi \bs n - \phi \sK_{\al \beta} \bs y^{,\beta} = \p_\al \phi \bs n - \sK_{\al \beta} \bs y^{,\beta} + O(h), \label{eq:pald}
\end{align}
where $\{\bs y^{,\beta}\}$ is the dual basis, relative to $\{ \bs y_{,\beta}\}$, that is tangent to $\bs y(\cl S)$. 
Using \eqref{eq:formofg}, it follows that
\begin{align}
	E_b
	= \frac{\p^2 E_m}{\p y^j_{,\beta} \p y^k_{,\alpha}} \p_\al d^k \p_\beta d^j.
\end{align}
(see Section 3 of \cite{HilgPip96}).
By the chain rule and \eqref{eq:strainconst},    
\begin{gather}
	\frac{\p^2 E_m}{\p y^j_{,\beta} \p y^k_{,\alpha}} \bs e^k \tens \bs e^j %= \frac{\p U_m}{ \p \sE_{\beta \gamma} \p \sE_{\al \delta}} \p_\delta \bs y \tens \p_\gamma \bs y + \frac{\p U_m}{\p \sE_{\al \beta}} \bs I \\
	= \Bigl [ \frac{\lambda \mu}{\lambda + 2\mu} \delta^{\al \rho}\delta^{\beta \gamma} + \frac{\mu}{2}(\delta^{\al \beta} \delta^{\gamma \rho} + \delta^{\al \gamma} \delta^{\beta \rho}) \Bigr ] \bs y_{,\rho} \tens \bs y_{\gamma} + O(h). 
\end{gather}
Then by \eqref{eq:pald}, we conclude 
\begin{align}
	E_b = \frac{\lambda \mu}{\lambda + 2\mu} (\tr \msK)^2 + \mu |\msK|^2 + O(h). 
\end{align}
In particular, we have that  
\begin{gather}
h E_m + \frac{h^3}{24} E_b = h \Bigl (\frac{\lambda \mu}{\lambda + 2\mu} (\tr \msE)^2 + \mu |\msE|^2 \Bigr ) + 
\frac{h^3}{24} \Bigl ( \frac{\lambda \mu}{\lambda + 2\mu} (\tr \msK)^2 + \mu |\msK|^2 \Bigr ) \\ + O(h^4),
\end{gather}
with the first two terms on the right-hand side being Koiter's classical shell energy (see \cite{Koit66, Ciarlet05Book, Steig13}). 

We now split $\sum_{c} (\tr \p_c \bs E)^2$ and $\sum_c |\p_c \bs E|^2$ into parts with and without the index 3. Using \eqref{eq:Erelations} and \eqref{eq:p3Erelations} we compute 
\begin{align}
 \sum_{c = 1}^3 |\p_c \bs E|^2 &= \sum_{\gamma = 1}^2 |\p_\gamma \msE|^2 + 2 \sum_{\gamma = 1}^2 \p_\gamma E_{\al 3} \p_\gamma E^{\al 3} + \sum_{\gamma = 1}^2 (\p_\gamma E_{33})^2 \\
 &\quad + 2 \p_3 E_{\al 3} \p_3 E^{\al 3} + \p_3 E_{\al \beta} \p_3 E^{\al \beta} + (\p_3 E_{33})^2 \\
 &= \sum_{\gamma = 1}^2 |\p_\gamma \msE|^2 + \frac{\lambda^2}{(\lambda + 2\mu)^2} \sum_{\gamma = 1}^2 (\tr \p_\gamma \msE)^2 + \p_3 E_{\al \beta} \p_3 E^{\al \beta}  + (\p_3 E_{33})^2, \\
 \sum_{c = 1}^3 |\tr \p_c \bs E|^2 &= \sum_{\gamma = 1}^2 |\tr \p_\gamma \msE|^2 + 2 \p_\gamma \tr \msE \p_\gamma E_{33} + \sum_{\gamma = 1}^2 (\p_\gamma E_{33})^2 \\
 &\quad + (\p_3 \tr \msE)^2 + 2 \p_3 \tr \msE \p_3 E_{33} + (\p_3 E_{33})^2 \\  
 &= \frac{4 \mu^2}{(\lambda + 2\mu)^2} \sum_{\gamma = 1}^2 (\tr \p_\gamma \msE)^2 + \frac{4 \mu^2}{\lambda^2} (\p_3 E_{33})^2. 
\end{align}
Using \eqref{eq:p3Erelations} and the fact that for $\al, \beta = 1,2$, 
\begin{align}
	\p_3 E_{\al \beta} = \frac{1}{2} \p_3  (\hbchi_{,\al} \cdot \hbchi_{,\beta}) \big |_{Z = 0} = \frac{1}{2} (\p_\al \bs d \cdot \p_\beta \bs y + \p_\al \bs y \cdot \p_\beta \bs d)
	= -\phi \sK_{\al \beta}, \label{eq:curvature}
\end{align}
we conclude that,  
\begin{align}
E_{sg} = \frac{\la \mu}{\la + 2\mu} \sum_{\gamma = 1}^2 (\tr \p_\gamma \msE)^2 + \mu \sum_{\gamma = 1}^2 |\p_\gamma \msE|^2 + \frac{\la \mu}{\la + 2\mu} \phi^2 (\tr \msK)^2 + \mu \phi^2 |\msK|^2, \label{eq:E2almost}
\end{align} 
and thus, 
\begin{gather}
	\int_{-h/2}^{h/2} W \, dZ = 
h \Bigl [ \frac{\lambda \mu}{\lambda + 2\mu} (\tr \msE)^2 + \mu |\msE|^2  
+ \ell_s^2 \sum_{\gamma = 1}^2 \Bigl ( \frac{\lambda \mu}{\lambda + 2\mu} (\tr \p_\gamma \msE)^2 + \mu |\p_\gamma \msE|^2 \Bigr ) \Bigr ] \\
 + 
\Bigl (\frac{h^3}{24} + h \ell_s^2 \Bigr ) \Bigl ( \frac{\lambda \mu}{\lambda + 2\mu} (\tr \msK)^2 + \mu |\msK|^2 \Bigr ) + O(h^4).
\end{gather}
This proves \eqref{eq:storedexpansion}. 

We now consider the kinetic energy. We have by \eqref{eq:chiexp}, \eqref{eq:dhcomp}, \eqref{eq:chibound}, and \eqref{eq:strainconst} that the kinetic energy satisfies  
\begin{gather}
	\int_{-h/2}^{h/2} \frac{1}{2} \rho_R \Bigl (
	|\p_t \hbchi|^2 + \ell_k^2 |\p_t \hbF (\hbF)^{-1}|^2 
	\Bigr ) dZ \\ = \frac{1}{2}\rho_s \Bigl (
	|\p_t \bs y|^2 + \frac{h^2}{12} |\p_t \bs d|^2 + \ell_k^2 |\p_t \bs F (\bs F)^{-1}|^2 + \frac{h^2}{12}
	\p_t \bs y \cdot \p_t \bs g \Bigr )  + O(\ell_k^2 h^3) + O(h^5), \\
	=  \frac{1}{2}\rho_s \Bigl (
	|\p_t \bs y|^2 + \frac{h^2}{12} |\p_t \bs d|^2 + \ell_k^2 |\p_t \bs F (\bs F)^{-1}|^2 \Bigr ) + O(h^4),
	\label{eq:kineticexpproof}
\end{gather}
where $\rho_s = h \rho_R$ is the mass per unit reference area of the midsurface. By \eqref{eq:strainconst} and the polar decomposition theorem, we have 
$\bs F = \bs R(\bs I + O(h^{1/2}))$ where $\bs R$ takes values in the group of rotations, and thus, by \eqref{eq:chiexp} we have  
\begin{gather}
	|\p_t \bs F (\bs F)^{-1}|^2 = |\p_t \bs F \bs R(\bs I + O(h^{1/2}))|^2 = |\p_t \bs F \bs R|^2 + O(h^{1/2}) = |\p_t \bs F|^2 + O(h^{1/2}) \\
	= |\nabla \p_t \bs y|^2 + |\p_t \bs d|^2 + O(h^{1/2}).  
\end{gather}
By \eqref{eq:dhcomp} and \eqref{eq:kineticexpproof}, we conclude that 
\begin{gather}
		\int_{-h/2}^{h/2} \frac{1}{2} \rho_R \Bigl (
	|\p_t \hbchi|^2 + \ell_k^2 |\p_t \hbF (\hbF)^{-1}|^2 
	\Bigr ) dZ = \\
\frac{1}{2}\rho_s \Bigl (
|\p_t \bs y|^2 + \frac{h^2 + 12 \ell_k^2}{12} |\p_t \bs d|^2 + \ell_k^2 |\nabla \p_t \bs y|^2 \Bigr ) + O(h^{7/2}),
\label{eq:kineticexpproof2}
\end{gather}

To simplify \eqref{eq:kineticexpproof2}, we observe that since $\bs n$ is a unit normal vector, we have $\bs n \cdot \p_t \bs n = \frac{1}{2} \p_t |\bs n|^2 = 0$. Thus, by \eqref{eq:bard}
\begin{gather}
	\int_{-h/2}^{h/2} \frac{1}{2} \rho_R \Bigl (
|\p_t \hbchi|^2 + \ell_k^2 |\p_t \hbF (\hbF)^{-1}|^2 
\Bigr ) dZ \\ = \frac{1}{2} \rho_s \Bigl (
		|\p_t \bs y|^2 + \frac{h^2 + 12 \ell_k^2}{12} \frac{\la^2}{(\la + 2\mu)^2} |\tr \p_t \msE|^2 + \frac{h^2+12\ell_k^2}{12} |\p_t \bs n|^2 + \ell_k^2 |\nabla \p_t \bs y|^2 
		\Bigr ), \label{eq:kineticexp}
\end{gather}
concluding the proof. 
\end{proof}

We can express the leading order energies appearing in \eqref{eq:storedexpansion} and \eqref{eq:kineticexpansion}, denoted $U$ and $K$ respectively, in terms of Young's modulus $E = \frac{\mu(2\mu + 3\lambda)}{\mu + \lambda}$ and Poisson's ratio $\nu = \frac{\lambda}{2(\mu+\lambda)} \in (0, 1/2)$. Indeed, we have 
\begin{gather}
U = \frac{a}{2} \Bigl [ \nu (\tr \msE)^2 + (1-\nu) |\msE|^2  
+ \ell_s^2 \sum_{\gamma = 1}^2 \Bigl ( \nu (\tr \p_\gamma \msE)^2 + (1-\nu) |\p_\gamma \msE|^2 \Bigr ) \Bigr ] \\
+ 
\frac{b}{2} \Bigl ( \nu (\tr \msK)^2 + (1-\nu) |\msK|^2 \Bigr ), \label{eq:surfacestored}
\end{gather}
and 
\begin{gather}
K = 
\frac{1}{2} \rho_s \Bigl (
|\p_t \bs y|^2 + \frac{\nu^2}{(1-\nu)^2} c |\tr \p_t \msE|^2 + c |\p_t \bs n|^2 + \ell_k^2 |\nabla \p_t \bs y|^2 
\Bigr ) \label{eq:surfacekinetic}
\end{gather}
where $\rho_s = h \rho_R$ and 
\begin{align}
	a = \frac{h E}{1-\nu^2}, \quad b = \frac{h E}{1-\nu^2} \Bigl (\frac{h^2}{24} + \ell_s^2 \Bigr ), \quad c = \frac{h^2 + 12 \ell_k^2}{12}, \label{eq:abc}
\end{align}
are all positive. Going forward we will utilize the expressions \eqref{eq:surfacestored} and \eqref{eq:surfacekinetic} due to their more compact form. If we identify the length scales $\ell_s$ and $\ell_k$ via \eqref{eq:specificls}, then the surface energies \eqref{eq:surfacestored} and \eqref{eq:surfacekinetic} are parameterized by five physical properties for a given material: its reference density $\rho_R$, Young's modulus $E$, Poisson's ratio $\nu$, thickness $h$, and natural configuration's inter-particle spacing $d$.  

\subsection{Strong ellipticity of the stored surface energy} We now show that the surface energy $U$ defined as in \eqref{eq:surfacestored} satisfies the following \emph{strong ellipticity condition} as long as $\ell_s > 0$. 
\begin{prop}\label{p:p1}
Let $U$ be as in \eqref{eq:surfacestored} with $E > 0$ and $\nu \in (0,1/2)$, and let $\bs y : \cl S \rar \bbE^3$ be an immersion. Let 
\begin{align}
		\bs C^{\al \beta \gamma \delta}(\bs Y) := \frac{\p^2 U}{\p y^{i}_{,\al \beta} \p y^{j}_{,\de \gamma}} \Big |_{\bs y(\bs Y)} \bs e^i \tens \bs e^j. 
\end{align} 
If $\ell_s > 0$, then $U$ satisfies the strong ellipticity condition: for all $\bs Y \in \cl S$, $(\sa_1, \sa_2) \in \bbR^2 \backslash\{\bs (0,0)\}$, and $\bs b \in \bbR^3 \backslash \{\bs 0\}$,
\begin{align}
	\sa_{\al} \sa_{\beta} \bs b \cdot \Bigl ( \bs C^{\al \be \delta \gamma}(\bs Y) \sa_{\de} \sa_{\gamma} \bs b \Bigr ) > 0  \label{eq:SE}.
\end{align}
If $\ell_s = 0$, then $U$ satisfies the weaker Legendre-Hadamard condition: for all $\bs Y \in \cl S$, $(\sa_1, \sa_2) \in \bbR^2 \backslash\{\bs (0,0)\}$, and $\bs b \in \bbR^3 \backslash \{\bs 0\}$,
\begin{align}
	\sa_{\al} \sa_{\beta} \bs b \cdot \Bigl ( \bs C^{\al \be \delta \gamma} \sa_{\de} \sa_{\gamma}(\bs Y) \bs b \Bigr ) \geq 0  \label{eq:LH},
\end{align}
and the left-hand size of \eqref{eq:LH} is zero precisely when $\bs b \cdot \bs n(\bs Y) = 0$. 
\end{prop}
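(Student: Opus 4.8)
The plan is to exploit the special structure of $U$ as a function of the $1$-jet and $2$-jet of $\bs y$: it splits as $U = U_0(\nabla\bs y) + Q(\nabla\bs y;\nabla^2\bs y)$, where the membrane part $U_0 = \tfrac a2\bigl(\nu(\tr\msE)^2 + (1-\nu)|\msE|^2\bigr)$ involves only first derivatives, while the sum of the strain-gradient part $\tfrac a2\ell_s^2\sum_{\gamma}\bigl(\nu(\tr\p_\gamma\msE)^2 + (1-\nu)|\p_\gamma\msE|^2\bigr)$ and the bending part $\tfrac b2\bigl(\nu(\tr\msK)^2 + (1-\nu)|\msK|^2\bigr)$ is a \emph{homogeneous} quadratic form $Q$ in the second derivatives $\p_{\al\beta}^2\bs y$ with coefficients built from $\nabla\bs y$ and $\bs n$. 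This is because $\p_\gamma\msE_{\al\beta} = \tfrac12\bigl(\p^2_{\gamma\al}\bs y\cdot\p_\beta\bs y + \p_\al\bs y\cdot\p^2_{\gamma\beta}\bs y\bigr)$ and $\msK_{\al\beta} = \bs n\cdot\p^2_{\al\beta}\bs y$ are each \emph{linear} in $\p^2\bs y$. Since $Q$ carries no term linear in $\p^2\bs y$, the left-hand side of \eqref{eq:SE}, being the Hessian of $U$ in its second-derivative arguments contracted against the rank-one tensor $\sa_\al\sa_\beta b^i$, equals twice the value of $Q$ obtained by replacing $\p^2_{\al\beta}\bs y$ with $\sa_\al\sa_\beta\bs b$ while keeping the genuine first derivatives of the immersion $\bs y$; any symmetrization convention for $\p/\p y^i_{,\al\beta}$ is irrelevant after contracting with the symmetric $\sa_\al\sa_\beta$.

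Carrying out this substitution, $\msK_{\al\beta}$ becomes $(\bs n\cdot\bs b)\sa_\al\sa_\beta$ and $\p_\gamma\msE_{\al\beta}$ becomes $\sa_\gamma S_{\al\beta}$ with $S_{\al\beta} := \tfrac12\bigl(\sa_\al(\bs b\cdot\p_\beta\bs y) + \sa_\beta(\bs b\cdot\p_\al\bs y)\bigr)$. Hence $\nu(\tr\msK)^2 + (1-\nu)|\msK|^2$ becomes $(\bs n\cdot\bs b)^2|\sa|^4$, while $\sum_\gamma\bigl(\nu(\tr\p_\gamma\msE)^2 + (1-\nu)|\p_\gamma\msE|^2\bigr)$ becomes $|\sa|^2\bigl(\nu(\tr S)^2 + (1-\nu)|S|^2\bigr)$, so that
\[
\sa_\al\sa_\beta\,\bs b\cdot\bigl(\bs C^{\al\be\delta\gamma}(\bs Y)\,\sa_\delta\sa_\gamma\bs b\bigr) = a\,\ell_s^2\,|\sa|^2\bigl(\nu(\tr S)^2 + (1-\nu)|S|^2\bigr) + b\,(\bs n\cdot\bs b)^2\,|\sa|^4 .
\]
Writing $r_\al := \bs b\cdot\p_\al\bs y$, one computes $\tr S = \sa^\al r_\al$ and $|S|^2 = \tfrac12\bigl(|\sa|^2(r_1^2+r_2^2) + (\sa^\al r_\al)^2\bigr)$, so in particular $\nu(\tr S)^2 + (1-\nu)|S|^2 \ge \tfrac{1-\nu}{2}\,|\sa|^2(r_1^2+r_2^2) \ge 0$.

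With this identity in hand the conclusions are immediate. Both summands are nonnegative since $a,b>0$, $\ell_s^2\ge 0$, and $0<\nu<1$; this gives \eqref{eq:LH}. If $\ell_s=0$, the right-hand side reduces to $b(\bs n\cdot\bs b)^2|\sa|^4$, which, because $|\sa|^2>0$, vanishes precisely when $\bs b\cdot\bs n=0$ — the asserted characterization of degeneracy. If $\ell_s>0$ and the expression vanishes, then $b(\bs n\cdot\bs b)^2|\sa|^4=0$ forces $\bs b\cdot\bs n=0$, and $a\ell_s^2|\sa|^2\bigl(\nu(\tr S)^2 + (1-\nu)|S|^2\bigr)=0$ forces $r_1^2+r_2^2=0$ (using $|\sa|^2>0$, $1-\nu>0$), i.e. $\bs b\cdot\p_1\bs y=\bs b\cdot\p_2\bs y=0$. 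Since $\bs y$ is an immersion, $\{\p_1\bs y,\p_2\bs y,\bs n\}$ is a basis of $\bbR^3$, so $\bs b=\bs 0$, contradicting $\bs b\ne\bs 0$; hence \eqref{eq:SE} holds.

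The only step that genuinely requires justification is the reduction in the first paragraph: that the quartic-in-$\sa$, quadratic-in-$\bs b$ form on the left of \eqref{eq:SE} is exactly twice the quadratic part $Q$ of $U$ evaluated on the rank-one second-derivative tensor $\sa\otimes\sa\otimes\bs b$. Once that is in place, the rest is elementary algebra, and the immersion hypothesis enters only through the single fact that $\{\p_1\bs y,\p_2\bs y,\bs n\}$ spans $\bbR^3$ — which is precisely what lets the strain-gradient term control the tangential components of $\bs b$ and the bending term control the normal component.
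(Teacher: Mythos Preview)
Your proof is correct and follows essentially the same route as the paper: both compute the quadratic form $\sa_\al\sa_\beta\,\bs b\cdot(\bs C^{\al\beta\delta\gamma}\sa_\delta\sa_\gamma\bs b)$ explicitly and observe that it splits as a bending contribution proportional to $(\bs b\cdot\bs n)^2$ and a strain-gradient contribution controlling the tangential components $\bs b\cdot\bs y_{,\al}$, then invoke the basis $\{\bs y_{,1},\bs y_{,2},\bs n\}$. Your reduction via ``$Q$ is homogeneous quadratic in $\nabla^2\bs y$, so the Hessian contracted against a rank-one tensor is $2Q$ evaluated at that tensor'' is a cleaner way to arrive at the same expression as the paper, which instead differentiates each term explicitly using the chain-rule relations \eqref{eq:relations} before contracting.
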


\begin{proof} 
Via straightforward calculations, we have the relations  
\begin{equation}
\label{eq:relations}
\begin{gathered}
	\frac{\p \sE_{\beta \nu}}{\p \bs y_{,\al}} = \frac{1}{2}\Bigl ( \delta^\al_\beta \bs y_{,\nu} + \delta^\al_{\nu} \bs y_{,\beta} \Bigr ), \\
	\frac{\p(\p_\nu \sE_{\al \beta})}{\p \bs y_{,\rho}} = \frac{1}{2} \Bigl (\delta^\rho_\beta \bs y_{,\al \nu} + \delta^\rho_\al \bs y_{,\beta \nu} \Bigr ), \\ 
	\frac{\p(\p_\nu \sE_{\al \beta})}{\p \bs y_{,\rho \sigma}} = \frac{1}{4} \Bigl (\delta^\rho_\al \delta^\sigma_\nu + \delta^\rho_\nu \delta^\sigma_\al \Bigr ) \bs y_{,\beta} + \frac{1}{4} \Bigl (\delta^\rho_\beta \delta^\sigma_\nu + \delta^\rho_\nu \delta^\sigma_\beta \Bigr ) \bs y_{,\al}, \\
	\frac{\p \sK_{\al \beta}}{\p \bs y_{,\nu}} = -\tensor{\gamma}{^\nu_\al_\beta} \bs n, \quad 
	\frac{\p \sK_{\al \beta}}{\p \bs y_{,\rho \sigma}} = \frac{1}{2} \Bigl ( \delta^\rho_\al \delta^\sigma_\beta + \delta^\sigma_\beta \delta^\rho_\al \Bigr ) \bs n,
\end{gathered}
\end{equation}
where $\sg_{\al \beta} = \bs y_{,\al} \cdot \bs y_{,\beta}$ are the components of the metric tensor on the convected surface and $\tensor{\gamma}{^\nu_\al_\beta}$ are the Christoffel symbols associated to the metric $\bs \msg$. These relations lead to the identities 
\begin{gather}
	\frac{\p^2}{\p \bs y_{,\al \beta} \p \bs y_{, \gamma \rho}} (\tr \msK)^2 = 2 \delta^{\al \beta} \delta^{\gamma \rho} \bs n \tens \bs n, \\
	\frac{\p^2}{\p \bs y_{,\al \beta} \p \bs y_{, \gamma \rho}} |\msK|^2 = (\delta^{\al \gamma}\delta^{\beta \rho} + \delta^{\al \rho} \delta^{\beta \gamma}) \bs n \tens \bs n, \\
	\frac{\p^2}{\p \bs y_{,\al \beta} \p \bs y_{, \gamma \rho}} \sum_\zeta (\p_\zeta \tr \msE)^2 = \frac{1}{2}
	(\delta^{\al \eta} \delta^{\beta \zeta} + \delta^{\al \zeta} \delta^{\beta \eta})
	(\delta^{\gamma \theta} \delta^\rho_\zeta + \delta^\gamma_\zeta \delta^{\rho \theta}) \bs y_{,\eta} \tens \bs y_{,\theta},
\end{gather} 
and 
\begin{gather}
	\frac{\p^2}{\p \bs y_{,\al \beta} \p \bs y_{, \gamma \rho}} \sum_\zeta |\p_\zeta \tr \msE|^2 = 
	\frac{1}{2}(\delta^{\gamma \beta} \delta^{\rho \alpha} + \delta^{\gamma \al} \delta^{\rho \beta}) \delta^{\eta \theta} \bs y_{,\eta} \tens \bs y_{,\theta} \\
		+ \frac{1}{4} \bigl [
		(\delta^{\gamma \beta} \delta^{\rho \eta} + \delta^{\gamma \eta} \delta^{\rho \beta}) \delta^{\al \theta}
		+ (\delta^{\gamma \al} \delta^{\rho \eta} + \delta^{\gamma \eta} \delta^{\rho \al}) \delta^{\beta \theta}
		\bigr ] \bs y_{,\eta} \tens \bs y_{,\theta}.
\end{gather}
For $\bs b \in \bbR^3$, let $\sab_\al = \bs b \cdot \bs y_{,\al}$. Then the previous imply that for all $(\sa_1, \sa_2) \in \bbR^2 \backslash\{\bs (0,0)\}$, and $\bs b \in \bbR^3 \backslash \{\bs 0\}$, 
\begin{gather}
	\sa_{\al} \sa_{\beta} \bs b \cdot \Bigl ( \bs C^{\al \be \delta \gamma} \sa_{\de} \sa_{\gamma} \bs b \Bigr ) =
	\frac{a}{2} \ell_s^2 \Bigl [ (1 - \nu) |\msa|^2 \sum_\al \sab_\al^2 + (1+\nu)|\msa|^2 \Bigl ( \sum_\al
	\sa_\al \sab_\al
	\Bigr )^2
	\Bigr ] \\
	+ b |\msa|^2 (\bs b \cdot \bs n)^2. \label{eq:Ccomputation}
\end{gather}
The conclusions of the proposition then follow immediately from \eqref{eq:Ccomputation} and the facts that $a,b > 0$, $\nu \in (0,1/2)$. 
\end{proof}

\section{Surface dynamics associated to the energies $U$ and $K$}

In this section we study the dynamics of an elastic material planar surface with stored surface energy density $U$ and kinetic energy density $K$, defined in \eqref{eq:surfacestored} and \eqref{eq:surfacekinetic} respectively.  

\subsection{Field equations}

The field equations governing the motion of a material planar surface $\cl S \subseteq \bbR^2$ with surface energy density $U$ and kinetic energy density $K$ are derived via Hamilton's variational principle, summarized as follows (see also \cite{Hilgers97}). 
Suppose that $\cl P \subseteq \cl S$ is the closure of a domain with $\cl C = \p \cl P$ given by a smooth closed curve parameterized by arclength $S$, with unit tangent $\bs \zeta$, and with outward normal $\bs \eta$. For $[t_0, t_1] \subseteq [0,\infty)$ and a motion $\bs y(\cdot,t)$, the \emph{kinetic energy} of the part $\cl P$ is $T(\cl P; t) = \int_{\cl P} K \, dA$ where $K$ is given by \eqref{eq:surfacekinetic}  \emph{stored energy} of the part $\cl P$ is $V(\cl P; t) = \int_{\cl P} U \, dA$ where $U$ is given by \eqref{eq:surfacestored}.  
In what follows, we denote
\begin{gather} 
	\msT^\al = \frac{\p U}{\p y^i_{,\al}} \bs e^i, \quad \msM^{\al \beta} = \frac{\p U}{\p y^i_{,\al \beta}} \bs e^i,  \quad 
	\bs \Pi^\al = \frac{\p K}{\p \dot{y}^i_{,\al}} \bs e^i, \\
  \msP^\al = \msT^\al - \p_{\be} \msM^{\al \beta} + \frac{\p}{\p t} \bs \Pi^\al. \label{eq:stressvectors}
\end{gather}

Throughout the remainder of this work, we denote $$\p^\beta = \delta^{\beta \rho} \p_\rho.$$ Using the relations \eqref{eq:relations} and $|\p_t \bs n|^2 = (\sg^{-1})^{\beta \nu} (\bs n \cdot \p_t \bs y_{,\beta}) (\bs n \cdot \p_t \bs y_{,\nu}),$ we obtain
\begin{gather}
\msT^\al = a \Bigl ( \nu (\tr \msE) \delta^{\al \beta} + (1-\nu) \sE^{\al \beta} \Bigr ) \bs y_{,\beta}
+ a \ell_s^2 \Bigl ( \nu (\tr \p^\rho \msE) \delta^{\al \beta} + (1-\nu) \p^\rho \sE^{\al \beta} \Bigr ) \bs y_{,\beta \rho} \\
- b \Bigl (
\nu (\tr \msK) \delta^{\beta \rho} \tensor{\gamma}{^\al_\beta_\rho} + (1-\nu) \sK^{\beta \rho} \tensor{\gamma}{^\al_\beta_\rho}
\Bigr ) \bs n,\\
\msM^{\al \beta} = \frac{a}{2} \ell_s^2 \Bigl [ \nu \Bigl (
 (\tr \p^\beta \msE) \delta^{\al \rho} +  (\tr \p^\al \msE) \delta^{\beta \rho}
\Bigr ) + (1-\nu) \Bigl (\p^\al \sE^{\beta \rho} + \p^\beta \sE^{\al \rho} \Bigr ) \Bigr ] \bs y_{,\rho} \\
+ b \Bigl (
\nu (\tr \msK ) \delta^{\al \beta} + (1-\nu) \sK^{\al \beta} \Bigr )
\bs n, \\
\bs \Pi^\al = \rho_s \Bigl [ \frac{\nu^2}{(1-\nu)^2}c (\tr \p_t \msE) \delta^{\al \beta} \bs y_{,\beta} + 
c (\sg^{-1})^{\al \beta}(\bs n \cdot \p_t \bs y_{,\beta}) \bs n \Bigr ] \\
+ \rho_s \ell_k^2 \delta^{\al \beta} \p_t \bs y_{,\beta}. 
\end{gather}

The \emph{action} of the motion in $\cl P \times [t_0, t_1]$ is 
$
	\cl A(\cl P \times [t_0, t_1]; \bs \chi) = \int_{t_0}^{t_1} [
	T(\cl P; t) - V(\cl P; t) 
	 ] dt. 
$
Let $\bs y_{\eps}$ be a smooth one parameter family of deformations of $\cl S$ such that $\bs y_0 = \bs y$ and $\bs \psi := \frac{d}{d\eps} \bs y_{\eps} |_{\eps = 0} : \cl S \times [t_0, t_1] \rar \bbR^3$ is a smooth variation. The field equations governing the motion of $\cl S$ are the Euler-Lagrange equations associated to the variational equation: for all $\bs \psi$, 
\begin{gather}
	\frac{d}{d\eps} \cl A(\cl P \times [t_0, t_1]; \bs y_{\eps} ) \big |_{\eps = 0} + \int_{t_0}^{t_1} \int_{\cl P} \bs f \cdot \bs \psi \, dA dt \\
	+\int_{t_0}^{t_1} \int_{\cl C} \Bigl (\bs \tau \cdot \bs \psi + \bs \mu \cdot (\eta^\beta \bs \psi_{,\beta} \Bigr ) dS dt - \int_{\cl P} \Bigl ( \rho_s \p_t \bs y \cdot \bs \psi + \bs \Pi^\al \cdot \bs \psi_{,\al} \Bigr ) dA \Big |_{t_0}^{t_1} 
	= 0, \label{eq:vareq}  
\end{gather}
where $\bs f$ is a prescribed external body force on $\cl S$, and $\bs \tau$ and $\bs \mu$ are generalized tractions. We refer the reader to \eqref{eq:linearmomentum} and \eqref{eq:angularmomentum} and their follow-up comments below for the interpretation of $\bs \tau$ and $\bs \mu$ in terms of resultant contact forces and couples, respectively.

Using the chain rule we compute   
\begin{gather}
	\frac{d}{d\eps} \cl A(\cl P \times [t_0, t_1]; \bs y_{\eps} ) \big |_{\eps = 0} = 
	\int_{t_0}^{t_1} \int_{\cl P} \Bigl ( \rho \p_t \bs y \cdot \p_t \bs \psi + \bs \Pi^\al \cdot \p_t \bs \psi_{,\al} \Bigr ) dA dt \\ 
	- \int_{t_0}^{t_1} \int_{\cl P} \Bigl (
	\msT^\al \cdot \bs \psi_{,\al} + \msM^{\al \beta} \bs \psi_{,\al \beta}
	\Bigr ) dA dt \label{eq:actionvar}
\end{gather}
Via straightforward calculations repeatedly using the divergence theorem, we then conclude that \eqref{eq:vareq} is satisfied for all variations if and only if: 
\begin{align}
	\begin{split}
		\rho_{s} \p_t^2 \bs y &= \p_{\al} \msP^\al + \bs f, \quad \mbox{ on } \cl P \times [t_0, t_1], \\
			\bs \tau &= \msP^\al \eta_\al - \frac{\p}{\p S} 
			\Bigl (\msM^{\al \beta} \zeta_\al \eta_\beta \Bigr ), \quad \mbox{ on } \p \cl P \times [t_0, t_1], \\
			\bs \mu &= \msM^{\al \beta} \eta_\al \eta_\beta, \quad \mbox{ on } \p \cl P \times [t_0, t_1].    
	\end{split}\label{eq:field}
\end{align}

\subsection{Balance laws}
By choosing the variation to be appropriate infinitesimal generators of spatial translations, spatial rotations, and temporal translations, we obtain standard balance laws for the part $\cl P$, expressed in the reference configuration. 

Indeed, let $\bs a \in \bbR^3$ and $\bs y_\eps(\bs Y,t) = \bs y(\bs Y,t) + \eps \bs a$ so $\bs \psi = \bs a$. Since the action's Lagrangian is clearly invariant with respect to superimposed (constant) spatial translations of $\bs y$, we conclude that $\frac{d}{d\eps} \cl A(\cl P \times [t_0, t_1]; \bs y_\eps) = 0$. Then \eqref{eq:vareq} implies that 
\begin{gather}
	\int_{t_0}^{t_1} \int_{\cl P} \bs f \cdot \bs a \, dA dt  
	+\int_{t_0}^{t_1} \int_{\cl C} \bs \tau \cdot \bs a \, dS dt - \int_{\cl P} \rho_s \p_t \bs y \cdot \bs a \, dA \Big |_{t_0}^{t_1} 
	= 0.
\end{gather}
Dividing the previous by $t_1 - t_0$ and taking the limit $t_1 \rar t_0$ yields 
\begin{align}
	\frac{d}{dt} \int_{\cl P} \rho_s \p_t \bs y \cdot \bs a \, dA =  \int_{\cl P} \bs f \cdot \bs a \, dA + 
	\int_{\cl C} \bs \tau \cdot \bs a \, dS. \label{eq:linmoma}
\end{align}
The equation \eqref{eq:linmoma} holding for all $\bs a \in \bbR^3$ implies the following relation interpreted as the \emph{balance of linear momentum} for the part $\cl P$:  
\begin{align}
	\frac{d}{dt} \int_{\cl P} \rho_s \p_t \bs y \, dA &=  \int_{\cl P} \bs f \, dA 
	+\int_{\cl C} \bs \tau \, dS. \label{eq:linearmomentum}
\end{align}
From \eqref{eq:linearmomentum} we see that the generalized traction $\bs \tau$ contributes a resultant contact force exerted on $\cl P$ by $\cl S \backslash \cl P$ through the term $\int_{\cl C} \bs \tau \, dS$. 

Now, let $\bs \Omega = \bs a \times$ be an arbitrary skew symmetric tensor with axial vector $\bs a$ and let $\bs y_\eps(\bs Y, t) = e^{\eps \bs \Omega} \bs y(\bs Y, t)$. Then $\bs \psi = \bs a \times \bs y$. We note that the strain tensors $\msE$, $\msK$ and $\nabla \msE$ are invariant with respect to super-imposed rotations of $\bs y$, and thus, $\frac{d}{d\eps} \cl A(\cl P \times [t_0, t_1]; \bs y_\eps) = 0$. Then \eqref{eq:vareq} and the circularity of the scalar triple product imply that
\begin{gather}
	\int_{t_0}^{t_1} \int_{\cl P}(\bs y \times \bs f) \cdot \bs a \, dA dt   
	+\int_{t_0}^{t_1} \int_{\cl C} \bigl ( \bs y \times \bs \tau + \eta^\nu \bs y_{,\nu} \times \bs \mu \bigr ) \cdot \bs a \, dS dt \\
	- \int_{\cl P} \rho_s \bigl (
	\bs y \times \p_t \bs y + 
	\bs y_{,\al} \times \bs \Pi^\al  \bigr ) \cdot \bs a \, dA \Big |_{t_0}^{t_1}
	= 0.
\end{gather}
As before, the previous implies \emph{balance of angular momentum} for the part $\cl P$: 
\begin{gather}
	\frac{d}{dt} 
	\int_{\cl P} \rho_s \bigl (
	\bs y \times \p_t \bs y + \bs y_{,\al} \times \bs \Pi^\al \bigr ) dA
	\\=  \int_{\cl P}\bs y \times \bs f \, dA 
	+ \int_{\cl C} \bigl ( \bs y \times \bs \tau + \eta^\nu \bs y_{,\nu} \times \bs \mu \bigr )  dS. \label{eq:angularmomentum}
\end{gather}
Referring to \eqref{eq:angularmomentum}, we can observe that $\bs{\Pi}^\alpha$ introduces a non-classical component, represented by the integral $\int_{\mathcal{P}} \bs{y}_{,\alpha} \times \bs{\Pi}^\alpha \, dA$, into the angular momentum. Additionally, the generalized traction $\bs{\mu}$ contributes a resultant contact couple applied to $\mathcal{P}$ by $\mathcal{S}\backslash \cl P$ through the term $\int_\mathcal{C} \eta^\nu \bs{y}_{,\nu} \times \bs{\mu} \, dS$.

Finally, choosing $\bs y_\eps(\bs Y, t) = \bs y(\bs Y, t+\eps)$ and arguing similarly as above lead to \emph{balance of energy} for the part $\cl P$:
\begin{align}
	\frac{d}{dt} \Bigl (T(\cl P; t) + V(\cl P; t) \Bigr ) = \int_{\cl P} \bs f \cdot \p_t \bs y \, dA 
	+\int_{\cl C} \Bigl (\bs \tau \cdot \p_t \bs y + \bs \mu \cdot (\eta^\nu \p_t \bs y_{,\nu}) \Bigr ) dS.
\end{align}

\subsection{Plane harmonic waves for infinitesimal displacements}

Let $\bs u(\bs Y, t) = \bs y(\bs Y,t) - \bs Y$ be the displacement field for the material surface. We write 
\begin{align}
	\bs u = \msu + w \bs e_3, \quad \msu = \su^\al \bs e_\al, \quad \eps_{\al \beta} = \frac{1}{2}(\p_\al \su_\beta + \p_\beta \su_\al). 
\end{align} 
Assuming that $|\bs u|$, $|\p_t \bs u| + |\nabla \bs u|$, and $|\nabla \p_t \bs u| + |\nabla \nabla \bs u|$, are bounded by $\delta \ll 1$, we have
\begin{gather}
\sE_{\al \beta} = \eps_{\al \beta} + O(\delta^2), \quad \sK_{\al \beta} = \p_\al \p_\beta w + O(\delta^2), \\
	\msT^\al = a \Bigl ( \nu (\tr \bs \eps) \delta^{\al \beta} + (1-\nu) \eps^{\al \beta} \Bigr ) \bs e_{\beta} + O(\delta^2), \\
\msM^{\al \beta} = \frac{1}{2} \ell_s^2 a \Bigl [ \nu \Bigl (
(\tr \p^\beta \bs \eps) \delta^{\al \rho} +  (\tr \p^\al \bs \eps) \delta^{\beta \rho}
\Bigr ) + (1-\nu) \Bigl (\p^\al \eps^{\beta \rho} + \p^\beta \eps^{\al \rho} \Bigr ) \Bigr ] \bs e_{\rho} \\
+ b \Bigl (
\nu (\p_\nu \p^\nu w ) \delta^{\al \beta} + (1-\nu) \p^\al \p^\beta w \Bigr )
\bs e_3 + O(\delta^2), \\
\bs \Pi^\al = \rho_s \Bigl [
\frac{\nu^2}{(1-\nu)^2} c (\tr \p_t \bs \eps) \delta^{\al \beta} \bs e_{\beta} + 
c \delta^{\al \beta}(\p_\beta \p_t w) \bs e_3 \Bigr ] \\
+ \rho_s \ell_k^2 \delta^{\al \beta} \p_\beta \p_t \bs u + O(\delta^2).
\end{gather}
The equations governing infinitesimal displacements correspond to the linearization of \eqref{eq:field}, i.e., we drop all terms that are $O(\delta^2)$, leading to: 
\begin{align}
\begin{split}
	\rho_s \p_t^2 \bs u = (1 - \ell_s^2 \p_\rho \p^\rho) a \Bigl (
	\nu \p^\beta \tr \bs \eps + (1-\nu) \p_\al \eps^{\al \beta} 
	\Bigr ) \bs e_\beta
- b (\p_\al \p^\al)^2 w \bs e_3 \\
+ \rho_s \Bigl (
\frac{\nu^2}{(1-\nu)^2} c \p^\beta \p_t^2 \tr \bs \eps + \ell_k^2 \p_\al \p^\al \p_t^2 \su^\beta \Bigr ) \bs e_\beta
+ \rho_s  (c + \ell_k^2 )\p_\al \p^\al \p_t^2 w \bs e_3 + \bs f. 
\end{split}\label{eq:linindices}
\end{align}
In terms of $\msu$, $w$, and standard notation wherein differential operators are with respect to the variables $\bs x = (x^1, x^2) = (Y^1, Y^2)$, we may express \eqref{eq:linindices} as 
\begin{align}
\begin{split}
	\rho_s \Bigl ( (1-\ell_k^2 \Delta) \p_t^2 \msu - \frac{\nu^2}{(1-\nu)^2} c \nabla \div \p_t^2 \msu \Bigr ) &=  \div a (1-\ell_s^2 \Delta) (
	\nu(\tr \bs \eps)\msP + (1-\nu) \bs \eps
	 ) \\ &\qquad +\msf, \\
	\rho_s \Bigl (\p_t^2 w -  (c + \ell_k^2 ) \Delta \p_t^2 w \Bigr ) &= - b \Delta^2 w + l, 
\end{split}\label{eq:limfam}
\end{align}
where we have written $\bs f = \msf + l \bs e_3$, $\msf = f^\al \bs e_\al$, and $\msP = \bs e_1 \tens \bs e_1 + \bs e_2 \tens \bs e_2$. Now viewing $\cl S$ as the midsurface of a gradient elastic plate with Young's modulus $E$, Poisson's ratio $\nu$, thickness $h$, and inter-particle spacing $d$, we have the identifications 
\begin{align}
		a = \frac{h E}{1-\nu^2}, \quad b = \frac{h E}{1-\nu^2} \Bigl (\frac{h^2}{24} + \ell_s^2 \Bigr ), \quad c = \frac{h^2}{12} + \ell_k^2. 
\end{align} 

We now consider a plane harmonic wave 
\begin{align}
	\bs u = A \exp[i(\msk \cdot \bs x - \omega t)] \bs d, \label{eq:planeharmonic}
\end{align}
where $\bs x = x^\mu \bs e_\mu$, $\msk = k^\mu \bs e_\mu \neq \bs 0$ is the wave vector, $\bs d = d^i \bs e_i$ is the unit length direction of motion, $\omega$ is the angular frequency, and $c_s = \omega/|\msk|$ is the phase velocity. We denote 
\begin{align}
	\hat{\msk} = \frac{\msk }{|\msk|}, \quad \msk^\perp = -k^2 \bs e_1 + k^1 \bs e_2, \quad \msd = d^\al \bs e_\al.  
\end{align}
By inserting \eqref{eq:planeharmonic} into \eqref{eq:limfam} we conclude that 
\begin{align}
\omega^2 \rho_s \Bigl [ (1 + \ell_k^2|\msk|^2) \msd &+ c \frac{\nu^2}{(1-\nu)^2} (\msk \cdot \bs d) \msk \Bigr ] \\ &= a(1 + \ell_k^2 |\msk|^2 )  \Bigl [
\frac{1+\nu}{2} (\msk \cdot \bs d) \msk + \frac{1-\nu}{2} |\msk|^2 \msd \Bigr ], \\	
\omega^2 \rho_s (1 + (c + \ell_k^2)|\msk|^2) (\bs d \cdot \bs e_3) &= {b}|\msk|^4 (\bs d \cdot \bs e_3).
\end{align}
The previous are equivalent to 
\begin{align}
	\bs A(\msk) \bs d = c_s^2 \bs d, \label{eq:dequation}
\end{align} 
where $\bs A(\msk)$ is the acoustical tensor defined by
\begin{align}
\begin{split}
\bs A(\msk) &= 
\frac{a}{\rho_s} \frac{1+\ell_s^2|\msk|^2}{1 + (c\nu^2(1-\nu)^{-2} + \ell_k^2)|\msk|^2} \hat{\msk} \tens 
\hat{\msk} + \frac{a(1-\nu)}{2\rho_s} \frac{1+\ell_s^2|\msk|^2}{1 + \ell_k^2|\msk|^2} \hat{\msk}^{\perp} \tens \hat{\msk}^{\perp} \\
 &\qquad + \frac{b}{\rho_s} \frac{|\msk|^2}{1+(c + \ell_k^2)|\msk|^2} \bs e_3 \tens \bs e_3. 
\end{split}\label{eq:acoustic}
\end{align}

From \eqref{eq:dequation} and \eqref{eq:acoustic} we conclude that for a given nonzero wave vector $\msk$, there are three types of plane harmonic waves corresponding to the three linearly independent directions of motion:
\begin{itemize}
\item There is one longitudinal wave with direction of motion $\bs d = \hat{\bs k}$ and phase velocity satisfying
\begin{align}
	c_{s,L}^2(|\msk|) = \frac{a}{\rho_s} \frac{1+\ell_s^2|\msk|^2}{1 + (c\nu^2(1-\nu)^{-2} + \ell_k^2)|\msk|^2}. 
\end{align} 
\item There is one tangentially transverse wave with direction of motion $\bs d = \hat{\bs k}^{\perp}$ and phase velocity satisfying
\begin{align}
	c_{s,T}^2(|\msk|) = \frac{a(\nu-1)}{2\rho_s} \frac{1+\ell_s^2|\msk|^2}{1 + \ell_k^2|\msk|^2}. 
\end{align}   
\item There is one normally transverse wave with direction of motion $\bs d = \bs e_3$ and phase velocity satisfying
\begin{align}
	c_{s,N}^2(|\msk|) = \frac{b}{\rho_s} \frac{|\msk|^2}{1+(c + \ell_k^2)|\msk|^2}. 
\end{align}
\end{itemize}

In our remaining discussion of the effects that the length parameters have on plane harmonic waves, we adopt the identifications from \eqref{eq:specificls}, i.e., 
\begin{align}
	\ell_s^2 = \frac{d^2}{12}, \quad \ell_k^2 = \frac{d^2}{6}.
\end{align}
We denote the classical phase velocities where $\bs \ell_s = \bs \ell_k = 0$ by
\begin{gather}
	c_{s,L,cl}^2(|\msk|) = \frac{a}{\rho_s} \frac{1}{1 + \frac{h^2}{12} \nu^2(1-\nu)^{-2}|\msk|^2}, \\ 
	c_{s,T,cl}^2(|\msk|) = \frac{a(\nu-1)}{2\rho_s}, \quad 
	c_{s,N,cl}^2(|\msk|) = \frac{a}{\rho_s}\frac{\frac{h^2}{24} |\msk|^2}{1 + \frac{h^2}{12}|\msk|^2}
\end{gather}  
and we point out the degenerate property of classical longitudinal waves that the phase velocity vanishes in the short wavelength limit,  
\begin{align}
	\lim_{k \rar \infty} c_{s,L,cl}^2(k) = 0. \label{eq:deflongwave} 
\end{align} 

Now, when it comes to tangentially transverse waves, things change from the classical setting. The phase velocities of these waves show dispersion and are consistently \emph{slower} than their classical counterparts: for all $|\msk| \neq 0$,
\begin{align}
	c_{s,T}^2(|\msk|) < c_{s,T,cl}^2(|\msk|).
\end{align} 

However, normally transverse waves behave differently. When $|\msk|^2 < \frac{12}{h^2}$, their phase velocities are \emph{faster} than the classical ones, $c_{s,N}^2(|\msk|) > c_{s,N,cl}^2(|\msk|)$, and when $|\msk|^2 < \frac{12}{h^2}$ their phase velocities are \emph{slower} than the classical ones, $c_{s,N}^2(|\msk|) < c_{s,N,cl}^2(|\msk|)$. 

Longitudinal waves also display a form of threshold behavior. When the wave number magnitude satisfies $	|\msk|^2 < \frac{12}{h^2}( 2 +  (\frac{1}{\nu} - 1 )^2  )$, then the phase velocity is \emph{slower} than the classical ones, $c_{s,N}^2(|\msk|) < c_{s,N,cl}^2(|\msk|)$. When the wave number magnitude is above this threshold, $|\msk|^2 > \frac{12}{h^2} ( 2 +  (\frac{1}{\nu} - 1 )^2 )$, then the phase velocity is \emph{faster} than the classical ones, $c_{s,N}^2(|\msk|) > c_{s,N,cl}^2(|\msk|)$.   
Moreover, in stark contrast to the classical setting, the phase velocity is positive in the short wavelength limit:
\begin{gather}
	\lim_{k \rar \infty} c_{s,L}^2(k) = \frac{a}{\rho_s} \frac{\ell_s^2}{c\nu^2(1-\nu)^{-2} + \ell_k^2} > 0. % = \frac{a}{\rho_s} \frac{d^2}{(h^2 + 2d^2)\nu^2(1-\nu)^{-2}+2d^2} > 0. 
\end{gather}

\section{Using the stored surface energy to model crack fronts}

In this final section, we briefly demonstrate how utilizing the stored energy density $U$ in the framework proposed by the author in \cite{Rodriguez2023StrainGradient} eliminates the singularities in stresses and strains observed in linear elastic fracture mechanics, particularly in the context of mode-III fracture. The overall model can be physically understood as representing a body primarily made of a Green elastic material with an additional thin, gradient elastic region of thickness $h$, extending from a section of its boundary. More precisely, we model such a body by a Green elastic solid $\tilde{\cl B}$ possessing a gradient elastic boundary surface, $\tilde{\cl S}$, with stored surface energy density $U$; see Figure \ref{f:1} from Section 1 for a schematic. The fracture model is parameterized by the three-dimensional solid's Young's modulus, Poisson's ratio, and natural inter-particle spacing. The only remaining factor to determine is the width of the region near the crack front where small-scale gradient elastic effects become significant.

\subsection{Linearized equations for infinitesimal displacements}

We recall from \cite{Rodriguez2023StrainGradient} the following set-up. Consider a Green elastic body $\tilde{\cl B} \subset \bbE^3$, possessing a stored energy density $\tilde W$. This body also contains a gradient elastic boundary surface $\tilde{\cl S} \subseteq \p \tilde{\cl B}$.  For our purposes we will also assume that $\tilde{\cl S} \subseteq \R^2$. The surface $\tilde{\cl S}$ has an associated stored surface energy density, denoted as $\tilde U$. Let $\bs f$ be an external body force, $\bs t$ a prescribed traction on $\tilde{\cl S}$, and $\bs \chi_0$ a prescribed placement of $\p \tilde{\cl B} \backslash \tilde{\cl S}$. The field equations governing equilibrium configurations $\bs \chi : \tilde{\cl B} \rar \bbE^3$ are given by
\begin{align}
	\begin{split}
		&\Div \tilde{\bs P} + \bs f = \bs 0, \quad \mbox{on } \tilde{\cl B}, \\
		&\tilde{\bs P} \bs N = \p_\al \tilde{\msP}^\al + \bs t, \quad \mbox{on } \tilde{\cl S}, \\
		&\bs \chi = \bs \chi_0, \quad \mbox{on } \p \tilde{\cl B} \backslash \tilde{\cl S}, 
	\end{split}\label{eq:fieldequations}
\end{align}
where $\bs N$ is the outward-pointing normal vector field on $\tilde{\cl S}$, $\tilde{\bs P} = \tensor{\tilde P}{_i^a} \bs e^i \tens \bs e_a$ is the Piola stress with $\tensor{\tilde P}{_i^a} = \frac{\p \tilde W}{\p \tensor{F}{^i_a}}$, $\Div \tilde{\bs P} = \bigl ( \p_{X^a} \tensor{\tilde P}{_i^a} \big ) \bs e^i$, and 
\begin{gather} 
		\tilde{\msT}^\al = \frac{\p \tilde U}{\p y^i_{,\al}} \bs e^i, \quad \tilde{\msM}^{\al \beta} = \frac{\p \tilde U}{\p y^i_{,\al \beta}} \bs e^i,  \quad
		\tilde{\msP}^\al = \tilde{\msT}^\al - \p_{\be} \tilde{\msM}^{\al \beta}.
\end{gather}
See Section 2 of \cite{Rodriguez2023StrainGradient}.

Now we assume that $\tilde{\cl B}$ is both homogeneous and isotropic, characterized by a Young's modulus $\tilde E$ and Poisson's ratio $\tilde \nu$. Additionally, we make the assumption that $\tilde U = U$, where $U$ is defined in equation \eqref{eq:surfacestored}. The equilibrium equations for infinitesimal displacements $\bs u : \tilde{\cl B} \rar \bbR^3$ correspond to the linearization of \eqref{eq:fieldequations}. By results of Sections 4.1 and 4.3, these equations are given by 
\begin{align}
	\begin{split}
		&\Div \bs \sigma + \bs f = \bs 0, \quad \mbox{on } \tilde{\cl B}, \\
		&\bs \sigma \bs N = (1 - \ell_s^2 \p_\rho \p^\rho) a \Bigl (
		\nu \p^\beta \tr \bs \eps + (1-\nu) \p_\al \eps^{\al \beta} 
		\Bigr ) \bs e_\beta \\
		&\qquad - b (\p_\al \p^\al)^2 w \bs e_3 + \bs t, \quad \mbox{on } \tilde{\cl S}, \\
		&\bs u = \bs 0, \quad \mbox{on } \p \tilde{\cl B} \backslash \tilde{\cl S}
	\end{split}\label{eq:linfieldequations}
\end{align} 
where $\bs \sigma = \frac{\tilde E}{1+\tilde \nu}(\frac{\tilde \nu}{1-2\tilde \nu} (\tr \bs \varepsilon) \bs I + \bs \varepsilon)$ and $\varepsilon_{ij} = \frac{1}{2}(\p_{i} u_j + \p_j u_i).$ 

Based on Theorem \ref{t:t1}, the theory outlined can be physically interpreted as modeling a body primarily made of a Green elastic material with an additional thin, gradient elastic region of thickness $h$, extending from a section of its boundary. More precisely, such a body is modeled by a Green elastic solid $\tilde{\cl B}$ possessing a gradient elastic boundary surface, $\tilde{\cl S}$, with stored surface energy density $U$. The complex interactions of the two distinct three-dimensional regions of the body are encapsulated by the boundary conditions on $\tilde{\cl S}$ appearing in \eqref{eq:linfieldequations}. 

\subsection{Modeling crack fronts under anti-plane shear}

Consider a brittle, infinite plate $\tilde{\cl B}$ under anti-plane shear loading, $\lim_{x^3 \rar \pm \infty} \sigma_{12} = 0$ and $\lim_{x^3 \rar \pm \infty}\sigma_{23} = \sigma$, with a straight crack $\cl C = \{(x^1, x^2, 0) \mid x^1 \in [-\ell, \ell] \}$ of length $2 \ell$, illustrated by Figure \ref{f:2}. For anti-plane shear of the form
\begin{align}
	\bs u(x^1, x^2, x^3) = \su(x^1, x^3) \bs e_2,
\end{align} 
the only nonzero components of the stress are 
\begin{align}
	\sigma_{12} = \frac{\tilde E}{2(1+\tilde \nu)} \su_{,1}, \quad \sigma_{23} = \frac{\tilde E}{2(1+\tilde \nu)} \su_{,3}
\end{align}
By the symmetry of the problem, $\su$ can be taken to be even in $x^1$ and odd in $x^3$, so we will focus only on the strain and stress fields for $x^3 \geq 0$. 

The equations determining the displacement are posited to be \eqref{eq:linfieldequations} on $\tilde{\cl B} = \{ (x^1,x^2,x^3) \mid x^3 \geq 0\}$ with $\tilde{\cl S} = \cl C$, $\bs t = \bs 0$, and $\bs f = \bs 0$. It is physically reasonable to assume that $\tilde E = E$ and $\tilde \nu = \nu$, where $E$ and $\nu$ are used in defining \eqref{eq:surfacestored} through \eqref{eq:abc}. This assumption is rooted in the fact that any region near the crack front, $\tilde{\cl S}$, is made of the same material as the bulk solid, unlike a coating of a different material. If we identify $\ell_s^2 = d^2/12$, with $d$ representing the material's natural inter-particle spacing, the only unspecified parameter is $h$: the thickness of the region near $\tilde{\cl S}$ where small-scale gradient elastic effects become significant.

\begin{figure}[b]
	\centering
	\includegraphics[scale=.75]{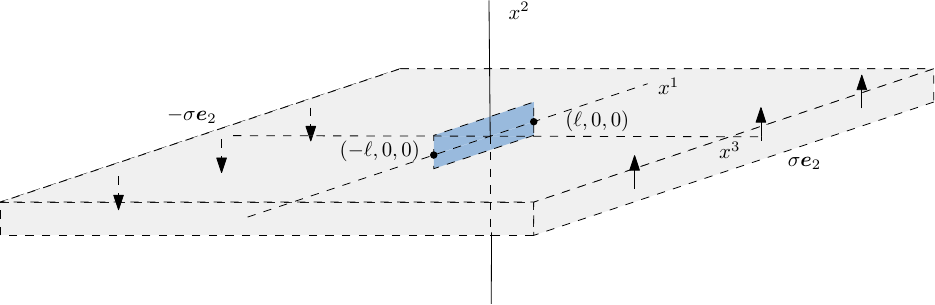}
	\caption{The set-up for the mode-III problem with the crack $\cl C$ appearing in blue.}
	\label{f:2}
\end{figure}

We define dimensionless variables
\begin{align}
	x = \frac{x^1}{\ell}, \quad y = \frac{x^2}{\ell}, \quad z = \frac{x^3}{\ell}, \quad
	\sv(x,z) = \frac{1}{\ell} \Bigl (
	\su(x^1, x^3) - \frac{2(1+\nu)}{E} \sigma x^3
	\Bigr ).
\end{align}
Then the field equations take the dimensionless form 
\begin{align}
	\begin{split}
		&\p_x^2 \sv(x,z) + \p_z^2 \sv(x,z) = 0, \quad z > 0, \\
		&-\sv_z(x,0) = \al \sv_{xx}(x,0) - \beta \sv_{xxxx}(x,0) + \gamma, \quad x \in (-1,1),\\
		&\sv(x,0) = 0, \quad |x| \geq 1, \\
		&\sv_x(\pm 1, 0) = 0, \quad
		\lim_{z \rar \infty} [|\p_x \sv(x,z)| + |\p_z \sv(x,z)|] = 0.
	\end{split}\label{eq:field2}
\end{align} 
The dimensionless parameters $\al, \beta$ and $\gamma$ are given by  
\begin{align}
	\al = \frac{h}{\ell} > 0, \quad \beta = \frac{h \ell_s^2}{\ell^3} > 0, \quad \gamma = \frac{\sigma}{\mu}, \label{eq:size}
\end{align}
with $\beta = \frac{h d^2}{12 \ell^3}$ if we adopt \eqref{eq:specificls}. As discussed in \cite{WaltonNote12, IMRUSch13}, the case $\ell_s = 0$ \emph{does not} lead to a model producing bounded stresses and strains up to the crack tips $x = \pm 1$, i.e., we have
\begin{align}
	\sup_{z > 0} |\nabla_{x,z} \sv(\pm 1, z)| = \infty. 
\end{align}

By using the Hilbert transform (see Section 4 of \cite{Rodriguez2023StrainGradient}), the problem \eqref{eq:field2} can be completely reduced to an integro-differential equation on the boundary for $f := \sv|_{z = 0}$,
\begin{align}
\begin{split}
	&\beta f''''(x) - \alpha f''(x) + \cl H f'(x)= \gamma , \quad x \in (-1,1), \\ 
	&f(\pm 1) = f'(\pm 1) = 0.
\end{split}	\label{eq:integrodiff} 
\end{align}
Since $\beta > 0$, the main result of Section 4 in \cite{Rodriguez2023StrainGradient} directly applies to \eqref{eq:integrodiff}, showing that our model for fracture rooted in \cite{Rodriguez2023StrainGradient} and utilizing $U$ from this work generates stresses and strains that remain bounded up to the crack tips.
\begin{thm}[Theorem 4.4, \cite{Rodriguez2023StrainGradient}]\label{t:bounded}
	There exists $C > 0$ depending on $\al$ and $\beta$ such that the following hold. There exists a unique classical solution $f \in C^4([-1,1])$ to \eqref{eq:integrodiff}, and $f$ satisfies 
	\begin{align}
		\| f \|_{C^4([-1,1])} \leq C |\gamma|. \label{eq:festimate}
	\end{align}
	Moreover, the dimensionless displacement field $\sv(x,z) = \int_{-\infty}^\infty P_z(x-s) f(s) ds$, where $P_z(\cdot)$ is the Poisson kernel for the upper half plane, produces bounded stresses and strains up to the crack tips: 
	\begin{align}
		\| \sv \|_{C^1(\{ z \geq 0 \})} \leq C |\gamma|. \label{eq:westimate}
	\end{align} 
\end{thm}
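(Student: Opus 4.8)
The plan is to show \eqref{eq:integrodiff} has a unique solution by a variational (Lax--Milgram) argument in $H^2_0((-1,1))$, to upgrade its regularity to $C^4$ by a bootstrap, and then to read off the bound on $\sv$ from classical facts about harmonic extensions to the half-plane; keeping track of constants throughout delivers \eqref{eq:festimate} and then \eqref{eq:westimate}. Since $f(\pm 1) = f'(\pm 1) = 0$, the extension of $f$ by zero to $\bbR$ lies in $H^2(\bbR)$, so $\cl H f'$ is well defined as the Hilbert transform on the line. Testing \eqref{eq:integrodiff} against $g \in H^2_0((-1,1))$ and integrating by parts turns it into the weak problem $B(f,g) = \gamma \int_{-1}^1 g\,dx$ for all $g$, where
\[
	B(f,g) = \beta \int_{-1}^1 f'' g'' \, dx + \alpha \int_{-1}^1 f' g' \, dx + \int_{\bbR} (\cl H f') \, g \, dx
\]
is a symmetric bilinear form. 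Boundedness of $B$ on $H^2_0((-1,1))$ is immediate from Cauchy--Schwarz, the fact that $\cl H$ is an isometry of $L^2(\bbR)$, and the Poincar\'e inequality $\| f \|_{L^2} + \| f' \|_{L^2} \lesssim \| f'' \|_{L^2}$ valid on $H^2_0((-1,1))$. For coercivity, the only term that is not manifestly nonnegative is $\int_{\bbR} (\cl H f') f$; but on the Fourier side $\widehat{\cl H f'}(\xi) = |\xi| \hat f(\xi)$, so this term is a positive multiple of $\| f \|_{\dot H^{1/2}(\bbR)}^2 \geq 0$. Hence $B(f,f) \geq \beta \| f'' \|_{L^2}^2$, which controls the full $H^2_0$-norm, and Lax--Milgram (equivalently, minimizing $\tfrac12 B(f,f) - \gamma \int_{-1}^1 f$) yields a unique weak solution with $\| f \|_{H^2((-1,1))} \leq C|\gamma|$, $C = C(\al,\beta)$.

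Next I would bootstrap. Because the zero-extension of $f$ lies in $H^2(\bbR)$ and $\cl H$ commutes with $\p_x$, we have $\cl H f' \in H^1(\bbR) \hookrightarrow C^{0,1/2}(\bbR)$ with norm $\lesssim |\gamma|$. Rewriting \eqref{eq:integrodiff} as $f'''' = \beta^{-1}(\gamma + \al f'' - \cl H f')$, the right-hand side lies in $L^2$, so $f \in H^4((-1,1)) \hookrightarrow C^3([-1,1])$. Then the zero-extension of $f'$ is Lipschitz (it vanishes together with its one-sided derivative $f''$ at $\pm 1$), so $\cl H f'$ is globally H\"older continuous, the right-hand side of the rewritten equation is continuous on $[-1,1]$, and therefore $f'''' \in C^0([-1,1])$, i.e. $f \in C^4([-1,1])$. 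Propagating the quantitative bounds through these steps gives \eqref{eq:festimate}, and uniqueness of the classical solution follows from uniqueness of the weak one.

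For the displacement field, $\sv(x,z) = \int_{-\infty}^\infty P_z(x-s) f(s)\,ds$ is the harmonic extension of the zero-extension of $f$, which is $C^{1,1}(\bbR)$ with support in $[-1,1]$. Differentiating under the integral and integrating by parts, $\p_x \sv$ is the Poisson extension of $f'$, so by the maximum principle $\| \p_x \sv \|_{L^\infty(\{ z \geq 0 \})} \leq \| f' \|_{L^\infty} \leq C|\gamma|$, and it is continuous up to $z = 0$. The function $\p_z \sv$ equals the negative of the conjugate Poisson integral of $f'$, with boundary trace $-\cl H f'$ on $\bbR$; since $f' \in C^{0,1/2}_c(\bbR)$, this conjugate integral extends boundedly and continuously to $\{ z \geq 0 \}$ with $\| \p_z \sv \|_{L^\infty(\{ z \geq 0 \})} \leq C\| f' \|_{C^{0,1/2}} \leq C|\gamma|$. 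Combining the two bounds gives \eqref{eq:westimate}.

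The main obstacle lies in two places. First is the coercivity step: one must recognize the nonlocal term as the nonnegative $\dot H^{1/2}$ seminorm of $f$ rather than treating it as a sign-indefinite lower-order perturbation to be absorbed into $\beta \| f'' \|_{L^2}^2$. Second --- and this is exactly where the hypothesis $\beta > 0$ (equivalently $\ell_s > 0$) is indispensable --- is the behavior at the crack tips $x = \pm 1$: the zero-extension of $f'$ has a corner there, so $\cl H f'$ is globally only H\"older and not smooth, and one must verify both that this still suffices to reach $f \in C^4$ and, more delicately, that the conjugate Poisson integral of the merely-H\"older function $f'$ stays bounded as $z \to 0$ near $x = \pm 1$. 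That last bound is precisely the boundedness of $\nabla \sv$ up to the crack tips that fails in classical linear elastic fracture mechanics, where $\beta = 0$.
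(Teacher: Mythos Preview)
The paper does not supply its own proof of this statement: the theorem is quoted verbatim from the author's earlier work \cite{Rodriguez2023StrainGradient}, and the present paper simply observes that, since $\beta>0$, that result applies directly to \eqref{eq:integrodiff}. So there is no in-paper argument to compare against; your proposal is a self-contained proof where the paper gives only a citation.

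On the merits, your argument is sound. The Lax--Milgram step is correct: the key observation that $\int_{\bbR}(\cl H f')\,f=\| f\|_{\dot H^{1/2}}^{2}\ge 0$ (via $\widehat{\cl H f'}(\xi)=|\xi|\hat f(\xi)$) is exactly what makes the bilinear form coercive on $H^{2}_{0}((-1,1))$ with constant $\beta$, and this is where $\beta>0$ (i.e.\ $\ell_s>0$) enters. The bootstrap to $C^{4}$ is also correct; one small imprecision is your parenthetical ``it vanishes together with its one-sided derivative $f''$ at $\pm 1$'': you neither need nor have $f''(\pm 1)=0$. What makes the zero-extension of $f'$ globally Lipschitz is simply $f'(\pm1)=0$ together with $f''\in L^{\infty}([-1,1])$, both of which you have once $f\in H^{4}\hookrightarrow C^{3}$. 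The Hilbert transform of a compactly supported Lipschitz function is H\"older of every order $<1$, so $\cl H f'\in C^{0}([-1,1])$ and $f''''\in C^{0}$ follows.

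For the harmonic-extension step, your reasoning is right and can be said a bit more cleanly: one has the identity $Q_{z}\!*\!f'=P_{z}\!*\!(\cl H f')$ on the Fourier side, so $\partial_{z}\sv=-P_{z}\!*\!(\cl H f')$ and the maximum principle gives $\|\partial_{z}\sv\|_{L^{\infty}(\{z\ge 0\})}\le\|\cl H f'\|_{L^{\infty}}$ immediately, without a separate estimate on the conjugate Poisson integral. Combined with $\|\partial_{x}\sv\|_{L^{\infty}}\le\|f'\|_{L^{\infty}}$ this yields \eqref{eq:westimate}.
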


\bibliographystyle{plain}
\bibliography{researchbibmech}
\bigskip

\centerline{\scshape C. Rodriguez}
\smallskip
{\footnotesize
	\centerline{Department of Mathematics, University of North Carolina}
	
	\centerline{Chapel Hill, NC 27599, USA}
	
	\centerline{\email{crodrig@email.unc.edu}}
}

\end{document}